\newtheorem{theorem}{Theorem}
\newtheorem{remark}{Remark}
\newtheorem{proposition}{Proposition}
\newtheorem{corollary}{Corollary}
\newproof{proof}{Proof}
\journal{Physical Communication}
\begin{document}
	\begin{frontmatter}		
		\title{Subcarrier Pairing as Channel Gain Tailoring: Joint Resource Allocation for Relay-Assisted Secure OFDMA with Untrusted Users} 
		\tnotetext[t1]{A preliminary version of this work was presented at the IEEE GLOBECOM Wkshp, Trusted Commun., Physical layer security, Dec. 2017, Singapore \cite{GCW17_SCP}.}
		
		\author[iitj]{Ravikant Saini}\ead{ravikant.saini@iitjammu.ac.in}
		\author[liu]{Deepak Mishra}\ead{deepak.mishra@liu.se}
		\author[iitdee]{Swades De\corref{cor1}}
		\cortext[cor1]{Corresponding author. Tel.: +91.11.2659.1042; fax: +91.11.2658.1606.}\ead{swadesd@ee.iitd.ac.in}
		
		\address[iitj]{Department of Electrical Engineering, Indian Institute of Technology Jammu, Jammu 181221, India}
		\address[liu]{ Department of Electrical Engineering, Link\"oping University, Link\"oping 58183, Sweden}	
		\address[iitdee]{Department of Electrical Engineering, Indian Institute of Technology Delhi, New Delhi 110016, India}		
		 
		\begin{abstract}			
			Joint resource allocation involving optimization of subcarrier allocation, subcarrier pairing (SCP), and power allocation in a cooperative secure {\color{black} orthogonal frequency division multiple access} (OFDMA) communication system with untrusted users is considered. Both amplify and forward (AF), and decode and forward (DF) modes of operations are considered with individual power budget constraints for source and relay. After finding optimal subcarrier allocation for an AF relayed system, we prove the joint power allocation as a generalized convex problem, and solve it optimally. Compared to the conventional channel gain matching view, {\color{black} the} optimal SCP is emphasized as a novel concept of channel gain tailoring. We prove that {\color{black} the} optimal SCP pairs subcarriers such that the variance among the effective channel gains is minimized. For {\color{black} a} DF relayed system, we show that depending on {\color{black} the} power budgets of source and relay, SCP can  either be in a subordinate role where it improves the energy efficiency, or in a main role where it improves the spectral efficiency of the system. In {\color{black} an} AF relayed system we confirm that SCP plays a crucial role, and improves the spectral efficiency of the system. {\color{black} The} channel gain tailoring property of SCP, various roles of SCP in improving {\color{black} the} spectral and energy efficiency of a cooperative communication system are validated with the help of simulation results.
			
		\end{abstract}
		
		\begin{keyword}
			Subcarrier allocation, power control, cooperative communication,  OFDMA, secure rate maximization
		\end{keyword}
	\end{frontmatter}
	 
	\section{Introduction}\label{sec_introduction}
	Mobile communication providing freedom from a tethered connection has led to an era of personalized world where users use network infrastructure at their will and comfort. With increasing usage of mobile applications for online transactions, the need of having strong security system to protect important information is increasing. In comparison to cryptographic techniques, physical layer security appears as a less-complex cost-effective solution \cite{amitav_TCST_2014}. For providing secure communication to {\color{black} all} users, physical layer security has recently gathered much interest in cooperative communication \cite{Bassily_SPM_2013}.
	
	The concept of information theoretic security was introduced in a landmark paper by Wyner \cite{Wyner1975}, where the author proved {\color{black} the}  feasibility of sending a message reliably to a destination, keeping it secret from an eavesdropper. Physical layer security utilizes {\color{black} the} inherent independence of subcarriers in an orthogonal frequency division multiple access (OFDMA) system. It has been investigated recently for OFDMA-based broadcast and cooperative communication for next generation communication systems, such as, fourth-generation (4G) LTE and {\color{black} fifth-generation} (5G) systems.  {\color{black}An exhaustive survey on recent cooperative relaying and jamming strategies for physical layer security has been provided in \cite{Jameel_CST_2018}.} Below, we survey the related works on secure OFDMA systems.
	 
	\subsection{Prior Art}
	\subsubsection{Resource Allocation without Subcarrier Pairing}
	In physical layer security, the system models are {\color{black} generally classified in two broad categories -} one having external eavesdropper which is an entity external to the network \cite{LLai_TIT_2008, Jeong_TSP_2011, Jindal_CL_2015, Deng_TIFS_2015, Hmwang_TC_2015, Derrick_TWC_2011, HMWang_TVT_2015}, and the other having untrusted users which are actually legitimate users of the system, {\color{black} who} have lost mutual trust and consider each other as {\color{black} an} eavesdropper  \cite{Jorswieck2008,Xiaowei_TIFS_2011,RSaini_CL_2016}.
Untrusted users is a far more hostile condition in comparison to external eavesdropper, {\color{black} where} each user contends for a subcarrier against rest of the users (behaving as eavesdroppers), which results in {\color{black} a} relatively complex resource allocation problem and lesser secure rate \cite{rsaini_TIFS_2016}. 
	
	{\em External Eavesdropper:} 
	In cooperative secure communication, several relaying strategies were proposed in \cite{LLai_TIT_2008}. In a {\color{black} decode and forward} (DF) relayed single source destination pair communication with direct link availability, and in presence of an external eavesdropper, power allocation was solved in \cite{Jeong_TSP_2011}. In a similar setup, resource allocation for an amplify and forward (AF) relayed cooperative communication was presented in \cite{Jindal_CL_2015}.  Utility of a helper node as a relay or a jammer in a similar four-node setup was discussed in \cite{Deng_TIFS_2015} for ergodic secrecy rate maximization. In a four-node setup with a trusted DF relay, outage constrained secrecy throughput maximization was investigated in \cite{Hmwang_TC_2015} considering system power budget and unavailability of direct path. With imperfect channel state information (CSI) at a multi-antenna source, resource allocation problem in a DF relay-assisted system in presence of a multi-antenna eavesdropper was considered in \cite{Derrick_TWC_2011}. The work in {\color{black}\cite{Abedi_TWC_2016} considered sum secure rate maximization for a multiple DF relay assisted secure communication system, constrained by limited feedback, in presence of multiple eavesdroppers.} Joint beamforming, jamming, and power allocation problem was considered in \cite{HMWang_TVT_2015} for a single source-destination pair communication assisted by multiple AF relays in presence of an external eavesdropper. {\color{black}A robust resource allocation framework to handle an active full-duplex eavesdropper has been considered in \cite{Abedi_TWC_2017} assuming a full-duplex receiver. With the availability of direct path, but without any information about eavesdropper's CSI, \cite{Poursajadi_TVT_2018} considered the problem of security as well as reliability in presence of multiple relays. Recently, \cite{Zhang_JSAC_2018} investigated subcarrier and power allocation in an AF relay assisted secure non-orthogonal multiple access (NOMA) communication system.} 
	
	{\em Untrusted users:}
Considering a broadcast OFDMA based secure communication system with two untrusted users, subcarrier and power allocation problem was investigated in \cite{Jorswieck2008}. The authors in \cite{Xiaowei_TIFS_2011} considered resource allocation problem for two classes of users: one having secure rate demands, and the other having best effort traffic. In {\color{black} \cite{rsaini_TIFS_2016}, the authors considered sum rate maximization and max-min fairness optimization problems in a  jammer assisted secure communication system with untrusted users.} Another important dimension to this field involves {\color{black} the} usage of helper nodes to improve secrecy performance of the communication system. In this direction, sum rate maximization and sum power minimization were presented in \cite{RSaini_CL_2016} for a DF relay-assisted system. With direct link availability in a DF relayed system, optimal transmission mode selection was investigated in \cite{RSaini_CL_2017}.
	
	\subsubsection{Resource Allocation with Subcarrier Pairing}
	Subcarriers on different hops in a cooperative communication system are independent. Mapping of subcarriers on two hops for performance optimization, is known as subcarrier pairing (SCP) \cite{Herdin_ICC_2006}. A few recent works that have explored SCP are discussed below.  
	
	{\em Non-secure OFDMA:} The authors in \cite{Hottinen_SPAWC_2007} proved that ordered pairing (OP) is optimal for an AF relayed network having no direct link. Subcarrier pairing in a non-secure OFDMA system with both AF and DF relay was presented in \cite{YLI_CL_2009}. Joint subcarrier pairing and power allocation in a DF relayed  communication for both joint and individual power budget constraints were studied in \cite{HSU_TSP_2011}. Joint resource allocation for two-way  AF relay-assisted multiuser networks was investigated in \cite{Guftar_TC_2011}.  
	
	{\em Secure OFDMA:} 
	Joint resource allocation problem involving subcarrier allocation, source power allocation and subcarrier pairing was studied in \cite{CCAI_WCSP_2015} for an AF relayed secure communication system considering equal power allocation at relay.  Energy-efficient resource allocation problem in a multi-user multi-relay scenario with an external eavesdropper was studied in \cite{ZChang_ICC_2016}. Joint resource allocation for secure OFDMA two-way relaying with {\color{black} an} external eavesdropper and multiple source-destination pairs was discussed in \cite{Zhang_TII_2016}.  
	
	\subsection{Research Gap and Motivation}
{\color{black}Even though secure cooperative communication has been extensively studied with external eavesdroppers, not many works have considered systems with untrusted users. \emph{To the best of our knowledge, AF relay assisted secure OFDMA communication system with untrusted users has not been investigated yet}. We consider an AF relay assisted system for its inherent simplicity in terms of design, implementation cost, easier maintenance and less stringent signal processing requirements. Since, the rate definitions in AF and DF relayed communication systems are different, these are to be investigated separately. Further, observing that the SCP investigated for an external eavesdropper is not usable for untrusted users, we investigate SCP for both AF and DF relayed system with untrusted users. \emph{We also believe that, joint resource allocation involving subcarrier allocation, power allocation, and subcarrier pairing has not been investigated for untrusted users.}}	
	Lastly, it is also worth noting that the prior works considering SCP in secure OFDMA have solved the problem in dual domain \cite{CCAI_WCSP_2015, ZChang_ICC_2016, Zhang_TII_2016}. Their dual decomposition based iterative solutions suffer from duality gap error which vanishes only with very large number of subcarriers. In contrast,  we investigate the problem in  \emph{primal domain} itself, and thus, the solution obtained does not suffer from duality gap. \textit{In particular, a near-global-optimal joint resource allocation is proposed  using an equivalent transformation {\color{black}for subcarrier allocation}, tight approximation {\color{black}for subcarrier pairing}, and generalized convexity principles {\color{black}for joint power allocation} \cite{Baz}; its quality has been both analytically described in Sections \ref{sec_sca_pa_af} and \ref{sec_opt_scp}, and numerically validated in Section \ref{sec_results}.}

	\subsection{Contribution and Scope of Work}
	The key contributions are summarized as follows:
	
	\begin{enumerate}[label=(\alph*)]
		\item {\color{black} We investigate a novel joint resource allocation problem for maximizing sum secure rate in both AF and DF relay-assisted secure OFDMA with untrusted users, and individual power budgets on source and relay. The practical constraints leads to a non-linear, non-convex optimization problem having  exponential complexity with number of subcarriers, which belongs to the class of NP hard problems.}  
		\item Combinatorial aspect of subcarrier allocation is solved optimally. Joint power allocation with individual power budgets is proved to be generalized convex, and solved optimally through {\color{black} Karush-Kuhn-Tucker} (KKT) conditions. Thereafter, the combinatorial issue of subcarrier pairing is solved near-optimally such that {\color{black} the} proposed SCP tends to be globally-optimal as signal-to-noise ratio (SNR) increases.   
		\item {\color{black} SCP as a novel system design concept, i.e., channel gain tailoring, and the spectral and energy efficiency improvements through optimal SCP as a steady step towards green communication are corroborated via extensive numerical results. Their efficacy is further strengthened by comparing the proposed schemes against the benchmarks.}   
	\end{enumerate}
	\begin{figure}[!t]
		\centering
		\epsfig{file=./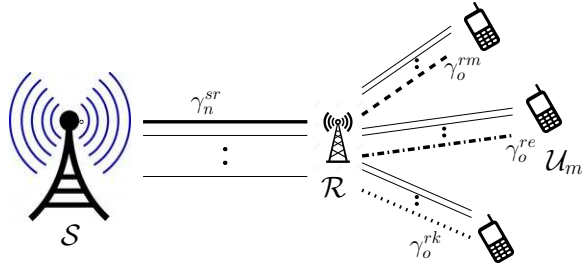,width=3in}\vspace{-0.5mm}
		\caption{System model describing the SCP $(n,o)$ between $\mathcal{S}-\mathcal{R}$ and $\mathcal{R}-\mathcal{U}$ links for different users.}
		\label{fig:system_model} 
	\end{figure}
	
	A preliminary version of this work was recently presented in \cite{GCW17_SCP}.
	
	Secure cooperative communication with untrusted users is a practical scenario. Communication to secondary users in presence of primary users in a cognitive radio networks, and communication to sensor nodes by using one of the nodes as  trusted relay in presence of nodes from other vendors in an {\color{black} internet of things} (IoT) based system are some of the {\color{black} example} scenarios with untrusted users, envisioned for advanced communication systems, such as 4G-LTE and 5G.
	The concepts of \textit{global-optimal subcarrier allocation, generalized convexity based power allocation, and channel gain tailoring based near optimal SCP}, as developed in this work, can be extended to secure communication systems with multiple relays/antennas, for investigating other quality of service (QoS) aware utilities, such as ergodic/outage capacity.

	\subsection{Paper Organization} 
	The system model and secure sum rate maximization problem are presented in Section \ref{sec_system_model}. The subcarrier and power allocation in AF and DF relay-assisted systems are considered in Section \ref{sec_sca_pa_af}. Optimal subcarrier pairing is discussed in Section \ref{sec_opt_scp}, and results are described in Section \ref{sec_results}. Concluding remarks are presented in Section \ref{sec_conclusion}.
	
\newcolumntype{C}{>{\centering\arraybackslash}p{3em}}
\newcolumntype{L}[1]{>{\raggedright\let\newline\\\arraybackslash}p{#1}}
\begin{table}[!htb]
\caption{Definitions of acronyms used}\label{acronym}
\centering
{\scriptsize
\begin{tabular}{|L{1.2cm} |L{3.3cm}| L{1.1cm}| L{3cm}| L{1.1cm}| L{4.1cm}|}
\hline
Acronym & Definition & Acronym & Definition & Acronym & Definition \\ 
\hline
OFDMA & Orthogonal division multiple access & KKT & Karush-Kuhn-Tucker condition & MINLP & Mixed integer non linear programming  \\ 
\hline 
DF & Decode and forward & AF & Amplify and forward & 4G & Fourth generation \\
\hline
LTE & Long term evolution & SNR & Signal to noise ratio & IoT & Internet of things\\
\hline
QoS & Quality of service & SCP & Subcarrier pairing & CSI & Channel state information \\
\hline
opt-SCP & Optimal SCP & OP & Ordered pairing & def-SCP & default SCP\\
\hline
OPA & Optimal power allocation & EPA & Equal power allocation & NOMA & Non-orthogonal multiple access\\
\hline
\end{tabular}
}
\end{table}

	\section{System Model and Problem Formulation}\label{sec_system_model} 
The downlink of a trusted relay assisted half duplex cooperative communication system with $N$ subcarriers and $M$ untrusted users $\mathcal{U}_m$ (cf. Fig. \ref{fig:system_model}) is considered. Relay $\mathcal{R}$ and source $\mathcal{S}$ have individual power budgets, which is more practical due to their geographically-separated locations. Secure communication with untrusted users is a practical scenario, with external eavesdropper's case being a simple extension. In this  hostile environment, users {\color{black} have} lost mutual trust, and request communication from $\mathcal{S}$ {\color{black} assuming} other users as eavesdroppers. 

	All nodes are {\color{black} assumed to be} equipped with single antenna \cite{Jeong_TSP_2011, Jindal_CL_2015}. Due to deep fading, users are not directly reachable to source \cite{Hmwang_TC_2015, HMWang_TVT_2015, Derrick_TWC_2011, RSaini_CL_2016}. To communication with users, $\mathcal{S}$ has to communicate through $\mathcal{R}$. All subcarriers over source to relay $\mathcal{S}-\mathcal{R}$ link, and relay to users $\mathcal{R}-\mathcal{U}$ link are considered to face quasi-static Rayleigh fading. Since all users are legitimate in the system, perfect CSI of all the links is  available at source using conventional channel estimation and feedback  mechanisms \cite{Jeong_TSP_2011, Jindal_CL_2015, RSaini_CL_2016, Jorswieck2008, Xiaowei_TIFS_2011}.
	
	\begin{remark}
		Resource allocation which includes subcarrier allocation, power allocation, and subcarrier pairing can be performed at $\mathcal{S}$ as well as $\mathcal{R}$. Assuming $\mathcal{S}$ to be a controlling node, $\mathcal{S}$ completes resource allocation before transmission of a frame, through cooperation from $\mathcal{R}$. 
	\end{remark}
	
	\subsection{Joint Resource Allocation Problem}
	A system with $M$ untrusted users is equivalent to a multiple eavesdropper scenario where each user has to contend against {\color{black}rest of the} $M-1$ potential eavesdroppers over each subcarrier. The strongest eavesdropper with maximum SNR is hereafter referred  to as  \textit{equivalent eavesdropper} $\mathcal{U}_e$. Secure rate $R_{s_{n}}^m$ of user $\mathcal{U}_m$ over subcarrier $n$ is given as \cite{RSaini_CL_2016, Jorswieck2008, Xiaowei_TIFS_2011}: 
	\begin{eqnarray}\label{secure_rate_definition}
	R_{s_{n}}^m = [R_{n}^m- R_{n}^e]^+, 
	\end{eqnarray}
	where $R_{n}^i$ is the rate of the $i$th user over subcarrier $n$. $R_{n}^e$ is the rate of the equivalent eavesdropper $\mathcal{U}_e$ defined as $R_{n}^e = \max\limits_{k\in \{1,2,\cdots, M\} \setminus m} R_{n}^k$~ \cite{Jorswieck2008,Xiaowei_TIFS_2011,RSaini_CL_2016}. $R_{n}^i$ depends on the channel gains of {\color{black}the} $\mathcal{S}-\mathcal{R}$ and $\mathcal{R}-\mathcal{U}$ links, and source and relay powers, $P_{n}^s$ and $P_{n}^r$. Exact rate definition depends on the mode of operation of the relay, e.g. AF or DF (cf. \eqref{sec_rate_af_def_1} and \eqref{sec_rate_df_def_1}).

	Sum secure rate maximization problem with individual power budgets can be stated as:
	\begin{align}\label{opt_prob_rate_max_scp_base}
	& \mathcal{P}0\!: \underset{P_{n,o}^s, P_{n,o}^r, \pi_{n,o}^m} {\text{maximize}} \left[ R_s = \sum_{m=1}^M \sum_{n=1}^N \sum_{o=1}^N  \pi_{n,o}^m  R_{s_{n,o}}^m \right]\!\!,\;\;\text{subject to:}\nonumber \\
	&C_{0,1}: \sum_{m=1}^M\sum_{n=1}^N \pi_{n,o}^m \leq 1 \text{ } \forall o, ~ C_{0,2}: \sum_{m=1}^M\sum_{o=1}^N \pi_{n,o}^m \leq 1 \text{ } \forall n, ~ C_{0,3}: \sum_{n=1}^N  P_{n}^s \leq P_S,\nonumber \\
	& C_{0,4}: \sum_{n=1}^N P_{n}^r \leq P_R,  \qquad\quad C_{0,5}:  \pi_{n,o}^m \in \{0,1\} \text{ } \forall m, n, o, C_{0,6}: P_{n}^s\ge 0, P_{n}^r\ge 0 \text{ } \forall n
	\end{align}
	where $P_S$ and $P_R$, respectively, are the power budgets of $\mathcal{S}$ and  $\mathcal{R}$. $\pi_{n,o}^m$ {\color{black}jointly} indicates subcarrier allocation and subcarrier pairing. $\pi_{n,o}^m=1$, if pair $(n,o)$ is allotted to user $\mathcal{U}_m$; else $\pi_{n,o}^m=0$. $C_{0,1}$ and $C_{0,2}$ ensures that one subcarrier on $\mathcal{S}-\mathcal{R}$ link can be attached with only one subcarrier on $\mathcal{R}-\mathcal{U}$ link. $C_{0,3}$ and $C_{0,4}$ are respective power budget constraints. $C_{0,5}$ shows the binary nature of allocation, and $C_{0,6}$ indicates power variables' non-negativity. 
	
	The joint resource allocation problem {\color{black}involving} subcarrier allocation, subcarrier pairing and power allocation is a combinatorial problem due to the binary {\color{black}nature of} subcarrier allocation and subcarrier pairing variables. With $P_{n,o}^r$ and $P_{n,o}^s$ as real variables, and $\pi_{n,o}^m$ as binary variables, there are $N(M+2)$ variables per subcarrier in the mixed integer non-linear programming (MINLP) problem $\mathcal{P}0$. {\color{black}Optimization problem $\mathcal{P}0$ having exponential complexity with number of subcarriers, belongs to the class of NP hard problems \cite{Y_F_LIU_TSP_2014}. Though globally optimal solution of this problem is the best upper bound on system performance, finding globally optimal solution may not be feasible in polynomial time.}
	
	In order to handle the combinatorial non-convex joint optimization problem $\mathcal{P}0$, we decouple it into three parts by using an equivalent transformation (for global optimal subcarrier allocation), a tight approximation (for near-global-optimal SCP), and the generalized convexity principles (for global-optimal power allocation). 
	In particular, we first solve one of the combinatorial aspect by presenting the global optimal subcarrier allocation policy. Then, the {\color{black}other} combinatorial aspect is dealt by obtaining the optimal SCP policy using a tight approximation. Lastly, the global-optimal power allocation for a given subcarrier allocation, and SCP is obtained by utilizing the generalized convexity principles \cite{Baz}. Next, we present the key insights and reasoning behind this proposed joint resource allocation {\color{black}strategy}.
	
	\subsection{Solution Methodology}
	Since there is no direct ($\mathcal{S}-\mathcal{U}$) link availability, the subcarrier allocation has to be completed over the $\mathcal{R}-\mathcal{U}$ link only. The subcarrier pairing policy matches subcarriers over $\mathcal{S}-\mathcal{R}$ and $\mathcal{R}-\mathcal{U}$ links, which are independent. Thus, in our adopted system model with untrusted users, subcarrier allocation and subcarrier paring are \textit{independent}, which can be \textit{equivalently} decoupled as separate problems. Further, since the same power is received by the user as well as the eavesdropper over a subcarrier, subcarrier allocation which ensures positive secure rate, is also independent of the power allocation policy \cite{RSaini_CL_2016}. \textit{Hence, we summarize that the optimal  subcarrier allocation is an independent decision, which can be done without compromising the joint optimality of the solution to the problem $\mathcal{P}0$.}
	
	Next, we highlight that the other two allocations, {\color{black}namely}, subcarrier pairing, and power allocation are \textit{not independent}, as discussed later in Section \ref{sec_opt_scp}. So, to present an efficient solution methodology, we first assume that SCP is known, and we wish to jointly optimize subcarrier allocation and power allocation variables. In this context, in Section \ref{sec_sca_pa_af} we prove that for a known SCP, source and relay power allocation along with subcarrier allocation can be jointly solved \textit{global-optimally}. 
	Then later in Section \ref{sec_opt_scp}, we present a \textit{tight asymptotic approximation} for optimal SCP utilizing its independence from underlying power allocation in the high SNR regime. 
	This near-optimal SCP is utilized in the joint subcarrier and power allocation problem $\mathcal{P}1$ defined in Section \ref{sec_opt_scp}.  Even after this relaxation in the SCP policy which is based on a tight approximation for effective channel gains to make SCP and power allocation independent of each other, later via Fig. \ref{fig:insight_c_df_af_2u3c} in Section \ref{proposed_optimal_vs_bruteforce} we numerically validate  that the proposed joint solution matches closely with the global-optimal one as obtained after applying brute force over all possible SCP combinations. 
	
	
	\section{Joint Subcarrier and Power Allocation}\label{sec_sca_pa_af} 
	Observing that subcarrier pairing and power allocation are dependent, where as subcarrier allocation is independent of both of them, we consider joint subcarrier allocation and power allocation for a known subcarrier pairing. 
	Thus, we consider default subcarrier pairing where subcarrier $n$ on $\mathcal{S}-\mathcal{R}$ link is paired with subcarrier $n$ on $\mathcal{R}-\mathcal{U}$ link. 
	The secure sum rate maximization problem with this subcarrier pairing is given by
	\begin{align}\label{opt_prob_rate_max}
	& \mathcal{P}1: \underset{P_{n}^s, P_{n}^r, \pi_{n}^m,\,\forall n,m} {\text{maximize}} \left[ R_s = \sum_{m=1}^M \sum_{n=1}^N   \pi_{n}^m  R_{s_{n}}^m \right],\qquad\text{subject to:}  \nonumber \\ 
	&C_{1,1}: \sum_{m=1}^M \pi_{n}^m \leq 1 \text{ } \forall n, ~ C_{1,2}: \pi_{n}^m \in \{0,1\} \text{ } \forall m, n, 
	~C_{1,3}: \sum_{n=1}^N P_{n}^s \leq P_S, \nonumber \\& C_{1,4}: \sum_{n=1}^N P_{n}^r \leq P_R, 
	\quad ~C_{1,5}: P_{n}^s\ge 0, P_{n}^r\ge 0 \text{ } \forall n,
	\end{align}
	where $\pi_{n}^m$ is a subcarrier allocation variable. $C_{1,1}$ and $C_{1,2}$ are  subcarrier allocation constraints. $C_{1,3}$ and $C_{1,4}$ are individual power budget constraints, and $C_{1,5}$ are power variables' non-negativity constraints. $\mathcal{P}1$ is a MINLP having $M+2$ variables per subcarrier, i.e., $\pi_{n}^m$ as $M$ binary variables, and  $P_{n}^r$ and $P_{n}^s$ as real variables. First, we find the optimal subcarrier allocation policy for an AF relayed system after investigating the conditions for achieving positive secure rate over a subcarrier in Section \ref{sec_secure_rate}. Next, by observing the nature of secure rate with respect to source and relay powers in Section \ref{nature_secure_rate}, generalized convexity of the joint power allocation for a given optimal subcarrier allocation and SCP is proved, and the global-optimal power allocation is thus, obtained by solving the KKT conditions in Section \ref{conexity_and_solution_af}. For the sake of completeness, this section closes with a brief summary of the key results on {\color{black}the} optimal subcarrier and power allocations for a DF relayed system \cite{RSaini_CL_2016} in Section \ref{sec_sca_pa_df}. Details of the proposed SCP, and it's independence from the optimal power allocation as obtained using a tight approximation are presented in Section~\ref{sec_opt_scp}.

	\subsection{Optimal Subcarrier Allocation Policy for AF relay}\label{sec_secure_rate}
	In general, subcarrier allocation means allocating $N$ subcarriers among $M$ users. Since there is no direct link, subcarrier allocation has to be done at $\mathcal{R}$ for the subcarriers over the $\mathcal{R}-\mathcal{U}$ link. In {\color{black}an} AF relayed system, $R_{n}^m$ rate of user $\mathcal{U}_m$ over subcarrier $n$ is given as \cite{Jindal_CL_2015}:
	\begin{eqnarray}\label{af_rate_definition}
	R_{n}^m = \frac{1}{2} \log_2 \left \{ 1 + \frac{P_{n}^s\gamma_n^{sr}P_{n}^r\gamma_n^{rm}}{\sigma^2\left(\sigma^2 + P_{n}^s\gamma_n^{sr} + P_{n}^r\gamma_n^{rm}\right)} \right \},
	\end{eqnarray}
	where $\gamma_n^{sr}$ and $\gamma_n^{rm}$ are respective channel gains on $\mathcal{S}-\mathcal{R}$ and $\mathcal{R}-\mathcal{U}$ links over subcarrier $n$,  $\sigma^2$ is additive white Gaussian noise (AWGN) variance. From \eqref{secure_rate_definition}, secure rate positivity condition can be stated as $R_{n}^m > R_{n}^e$. Using \eqref{af_rate_definition} along with monotonicity of $\log(\cdot)$ function:
	\begin{eqnarray}
	\frac{P_{n}^s\gamma_n^{sr}P_{n}^r\gamma_n^{rm}}{\sigma^2 + P_{n}^s\gamma_n^{sr} + P_{n}^r\gamma_n^{rm}} > \frac{P_{n}^s\gamma_n^{sr}P_{n}^r\gamma_n^{re}}{\sigma^2 + P_{n}^s\gamma_n^{sr} + P_{n}^r\gamma_n^{re}}. 
	\end{eqnarray}
	
	After simplifications, feasibility condition for positive secure rate over a subcarrier $n$ is given by $\gamma_n^{rm}>\gamma_n^{re}$. This condition leads to {\color{black}the} optimal subcarrier allocation policy, i.e., allocate a subcarrier to the user having the maximum channel gain. Mathematically,  
	\begin{eqnarray}\label{subcarrier_alloc_relay}
	\pi_{n}^m=
	\begin{cases}
	1 & \text{if $\gamma_{n}^{rm} > \max\limits_{o \in \{1,2,\cdots, M\} \setminus m} \gamma_{n}^{ro}$}\\
	0 & \text{otherwise.}
	\end{cases} 
	\end{eqnarray} 
	
	\begin{remark}
		Condition \eqref{subcarrier_alloc_relay} is necessary and sufficient for positive secure rate over a subcarrier in {\color{black}a} trusted AF relay-assisted OFDMA system with untrusted users. Any other allocation except \eqref{subcarrier_alloc_relay} will lead to  zero secure rate.  It is also worth noting that subcarrier allocation policy for  {\color{black}an} AF relay system is the same as that for {\color{black}a} secure DF relayed system \cite[eq. (5)]{RSaini_CL_2016}.
	\end{remark}
	
	\begin{remark}
		The user having the next best channel gain is the equivalent eavesdropper $\mathcal{U}_e$. Since there is always a user having maximum gain over a subcarrier $n$, each subcarrier is allocated to some user. Thus, \emph{`no communication'} scenario does not arise in untrusted users.
	\end{remark}
	
	To obtain optimal power allocation, first we discuss the unique characteristics of the secure rate, and then prove that the joint power allocation is a generalized convex problem.
	
	\subsection{Nature of Secure Rate in Source and AF Relay Powers}\label{nature_secure_rate}
	After subcarrier allocation using \eqref{subcarrier_alloc_relay}, the secure rate $R_{s_n}^m$ of user $\mathcal{U}_m$ over subcarrier $n$ as defined in \eqref{af_rate_definition}, can be restated as:
	\begin{align}\label{sec_rate_af_def_1}
	&R_{s_n}^m = \frac{1}{2} \log_2 \left \{ 1 + \frac{P_{n}^s\gamma_n^{sr}P_{n}^r\gamma_n^{rm}}{\sigma^2(\sigma^2 + P_{n}^s\gamma_n^{sr} + P_{n}^r\gamma_n^{rm})} \right \} - \frac{1}{2} \log_2 \left \{ 1 + \frac{P_{n}^s\gamma_n^{sr}P_{n}^r\gamma_n^{re}}{\sigma^2(\sigma^2 + P_{n}^s\gamma_n^{sr} + P_{n}^r\gamma_n^{re})} \right \}
	\end{align}
	On applying further simplifications to \eqref{sec_rate_af_def_1}, we obtain:
	\begin{eqnarray}\label{simplified_secure_rate}
	R_{s_n}^m = \frac{1}{2} \log_2 \left[ \frac{(\sigma^2+P_{n}^r\gamma_n^{rm})(\sigma^2 + P_{n}^s\gamma_n^{sr} + P_{n}^r\gamma_n^{re})}{(\sigma^2+P_{n}^r\gamma_n^{re})(\sigma^2 + P_{n}^s\gamma_n^{sr} + P_{n}^r\gamma_n^{rm})} \right]. 
	\end{eqnarray}
	The following proposition outlines the nature of $R_{s_n}^m$ with {\color{black}the} power allocations $P_{n}^s$ and $P_{n}^r$.
	
	\begin{proposition}\label{AF_secure_rate_nature}
		The secure rate $R_{s_n}^m$ of user $\mathcal{U}_m$ over a subcarrier $n$ in {\color{black}an} AF relay-assisted secure communication system is a concave function of $P_{n}^s$, and a pseudoconcave function \cite{Baz} of $P_{n}^r$ achieving a unique maxima. 
	\end{proposition}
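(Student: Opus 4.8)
The plan is to establish the two claims separately, starting from the simplified single‑subcarrier rate expression \eqref{simplified_secure_rate}. To lighten notation, write $a$, $b$, $c$, $x$, $y$ for $\gamma_n^{sr}$, $\gamma_n^{rm}$, $\gamma_n^{re}$, $P_n^s$, $P_n^r$, respectively, and note that the optimal subcarrier allocation rule \eqref{subcarrier_alloc_relay} guarantees $b>c$. Up to the positive constant $\tfrac{1}{2\ln 2}$, $R_{s_n}^m$ equals $\ln(\sigma^2+by)+\ln(\sigma^2+ax+cy)-\ln(\sigma^2+cy)-\ln(\sigma^2+ax+by)$, which is the form I would differentiate in each variable.

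\emph{Concavity in $P_n^s$.} Only the two terms $\ln(\sigma^2+ax+cy)-\ln(\sigma^2+ax+by)$ involve $x$. Differentiating twice in $x$ gives $\partial^2 R_{s_n}^m/\partial x^2$ proportional to $(\sigma^2+ax+by)^{-2}-(\sigma^2+ax+cy)^{-2}$, which is strictly negative since $b>c$. Hence $R_{s_n}^m$ is strictly concave in $P_n^s$ on $P_n^s\ge 0$; as a by‑product the first derivative is positive for the same reason, so the rate is also strictly increasing in source power. This part is routine.

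\emph{Pseudoconcavity and unique maximiser in $P_n^r$.} Here $R_{s_n}^m$ need not be concave, so instead I would characterise the sign of $\partial R_{s_n}^m/\partial P_n^r$. Differentiating and collecting the $b$‑terms and the $c$‑terms separately, the derivative factors, up to a positive constant, as $ax\bigl[\tfrac{b}{(\sigma^2+by)(\sigma^2+ax+by)}-\tfrac{c}{(\sigma^2+cy)(\sigma^2+ax+cy)}\bigr]$. The key step is to invert each rational term via the identity $\tfrac{t}{(\sigma^2+ty)(\sigma^2+ax+ty)}=\bigl[\tfrac{\sigma^2(\sigma^2+ax)}{t}+(2\sigma^2+ax)y+ty^2\bigr]^{-1}$; the bracket is then positive exactly when $\tfrac{\sigma^2(\sigma^2+ax)}{b}+by^2<\tfrac{\sigma^2(\sigma^2+ax)}{c}+cy^2$, and after cancelling the common $y$‑linear term and dividing by $b-c>0$ this collapses to the single inequality $y^2<\sigma^2(\sigma^2+ax)/(bc)$. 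Consequently, for $P_n^s>0$, $\partial R_{s_n}^m/\partial P_n^r$ is strictly positive for $P_n^r<y^\star$, vanishes at $P_n^r=y^\star$, and is strictly negative for $P_n^r>y^\star$, where $y^\star=\sqrt{\sigma^2(\sigma^2+P_n^s\gamma_n^{sr})/(\gamma_n^{rm}\gamma_n^{re})}$. Any $C^1$ function on an interval that increases strictly and then decreases strictly is pseudoconcave and attains its maximum at the unique stationary point, which finishes the argument; the degenerate case $P_n^s=0$ yields $R_{s_n}^m\equiv 0$, which is trivially pseudoconcave.

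The crux is the algebraic collapse in the $P_n^r$ analysis: after inverting the two rational terms, every constant and $y$‑linear cross‑term cancels and one is left with the clean quadratic threshold comparing $y^2$ against $\sigma^2(\sigma^2+P_n^s\gamma_n^{sr})/(\gamma_n^{rm}\gamma_n^{re})$. Once that reduction is in hand, both pseudoconcavity and uniqueness of the maximiser are immediate, and one obtains for free a closed‑form expression for the rate‑optimal relay power on each subcarrier, which feeds directly into the KKT‑based joint power allocation in Section \ref{conexity_and_solution_af}.
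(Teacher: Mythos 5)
Your proof is correct and follows essentially the same route as the paper: concavity in $P_{n}^s$ via a negative second derivative, and pseudoconcavity in $P_{n}^r$ by showing the first derivative changes sign exactly once at the same critical point $P_{n}^{r^*}=\sqrt{\sigma^2(\sigma^2+P_{n}^s\gamma_n^{sr})/(\gamma_n^{rm}\gamma_n^{re})}$, with unimodality giving pseudoconcavity. The only difference is presentational — you differentiate the logarithm of the ratio directly as a sum of four log terms (with a tidy inversion identity for the sign analysis), whereas the paper works with the operand $\mathcal{O}_n^m$ and transfers its properties through the monotone concave $\log$.
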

	
	\begin{proof}
		Let us denote the operand of the logarithm function in \eqref{simplified_secure_rate} as $\mathcal{O}_n^m$, i.e., 
		\begin{eqnarray}
		\mathcal{O}_n^m = \frac{(\sigma^2+P_{n}^r\gamma_n^{rm})(\sigma^2 + P_{n}^s\gamma_n^{sr} + P_{n}^r\gamma_n^{re})}{(\sigma^2+P_{n}^r\gamma_n^{re})(\sigma^2 + P_{n}^s\gamma_n^{sr} + P_{n}^r\gamma_n^{rm})}.
		\end{eqnarray}
		The first order derivative of $\mathcal{O}_n^m$ {\color{black}with respect to} (w.r.t.) $P_n^s$ is given as:
		\begin{eqnarray}\label{rate_derivative_psn}
		\frac{\partial \mathcal{O}_n^m}{ \partial P_n^s} =  \left(\frac{\sigma^2 + P_{n}^r\gamma_n^{rm}}{\sigma^2 + P_{n}^r\gamma_n^{re}} \right) \frac{P_n^r \gamma_n^{sr} (\gamma_n^{rm}-\gamma_n^{re})}{(\sigma^2 + P_{n}^s\gamma_n^{sr} + P_{n}^r\gamma_n^{rm})^2}.
		\end{eqnarray}
		Since $\gamma_n^{rm}>\gamma_n^{re}$, derivative is always positive. Further, the second order derivative of $\mathcal{O}_n^m$ is always negative (cf. \eqref{eq:Prf1}). Thus, $\mathcal{O}_n^m$ is concave increasing in $P_n^s$. Since $\log(\cdot)$ is a concave increasing function, $R_{s_n}^m$ is concave in $P_n^s$ \cite{Bookboyd}. 
		
		To prove that secure rate is a pseudoconcave function of $P_n^r$, we prove that $R_{s_n}^m$ is unimodal with respect to $P_n^r$. The first order derivative of $\mathcal{O}_n^m$ with respect to  $P_n^r$ is:
		\begin{align}\label{rate_derivative_prn}
		\frac{\partial \mathcal{O}_n^m}{ \partial P_n^r} = \frac{\gamma_n^{sr} P_n^s (\gamma_n^{rm}-\gamma_n^{re}) \left(\sigma^4+\gamma_n^{sr}P_n^s\sigma^2-\gamma_n^{rm}\gamma_n^{re}(P_n^r)^2 \right)}{(\gamma_n^{re}P_n^r + \sigma^2)^2(\gamma_n^{rm} P_n^r + \gamma_n^{sr} P_n^s + \sigma^2)^2}.
		\end{align}
		From \eqref{rate_derivative_prn}, we note that there exists an optimal relay power $P_{n}^{r^*}$ achieving the maximum value of $\mathcal{O}_n^m$. $P_{n}^{r^*} $ obtained by finding critical point of $\mathcal{O}_n^m$ with respect to $P_n^r$, is defined as:
		\begin{eqnarray}\label{optimal_prn_for_psn}
		P_{n}^{r^*} \triangleq \sqrt{\frac{(\sigma^4+P_{n}^s\gamma_n^{sr}\sigma^2)}{\gamma_n^{rm}\gamma_n^{re}}}.
		\end{eqnarray}
		Since $\log(\cdot)$ is a monotonic increasing function, pseudoconcavity of $\mathcal{O}_n^m$ with respect to $P_{n}^{r}$ is retained for $R_{s_n}^m$ also. Thus, secure rate is a pseudoconcave function of $P_{n}^{r}$, with an optimal $P_{n}^{r^*}$ achieving maximum secure rate over $n$.
	\end{proof}
	
	We next obtain the global-optimal power allocation in an AF-relayed secure OFDMA.
	
	\subsection{Generalized Convexity and Power Allocation in AF relay}\label{conexity_and_solution_af}
	After subcarrier allocation, combinatorial aspect of $\mathcal{P}1$ is solved. Joint source and relay power allocation problem for {\color{black}the} AF relay-assisted secure OFDMA system can be stated as:
	\begin{align}\label{opt_prob_rate_max_power_allocation}
	&\mathcal{P}2: \underset{P_{n}^s, P_{n}^r} {\text{max}}\sum_{n=1}^N  \frac{1}{2} \log_2 \left[ \frac{(\sigma^2+P_{n}^r\gamma_n^{rm})(\sigma^2+ P_{n}^s\gamma_n^{sr} + P_{n}^r\gamma_n^{re})}{(\sigma^2+P_{n}^r\gamma_n^{re})(\sigma^2+ P_{n}^s\gamma_n^{sr} + P_{n}^r\gamma_n^{rm})} \right] \nonumber \\ 
	&\text{subject to: } C_{2,1}: \sum_{n=1}^N P_{n}^s \leq P_S, \quad C_{2,2}: \sum_{n=1}^N P_{n}^r \leq P_R,  \quad C_{2,3}: P_{n}^s\ge 0, P_{n}^r\ge 0 \text{ } \forall n.
	\end{align}
	
	The Lagrangian of the problem $\mathcal{P}2$ can be stated as:
	\begin{align}\label{lagrange_opt_prob_rate_max_power_allocation}
	&\mathcal{L}_2 =  \sum_{n=1}^N  \frac{1}{2} \log_2 \left[ \frac{(\sigma^2+P_{n}^r\gamma_n^{rm})(\sigma^2 + P_{n}^s\gamma_n^{sr} + P_{n}^r\gamma_n^{re})}{(\sigma^2+P_{n}^r\gamma_n^{re})(\sigma^2 + P_{n}^s\gamma_n^{sr} + P_{n}^r\gamma_n^{rm})} \right]  -\lambda \left( \sum_{n=1}^N P_{n}^s - P_S \right) - \mu \left( \sum_{n=1}^N P_{n}^r - P_R \right).
	\end{align}
	
	Equating first order derivative of the Lagrangian w.r.t. $P_n^s$ to zero, we get: 
	\begin{eqnarray}\label{lagrange_derivative_with_psn}
	\lambda = \frac{1}{2} \frac{P_n^r \gamma_n^{sr} (\gamma_n^{rm}-\gamma_n^{re})}{(\sigma^2 + P_{n}^s\gamma_n^{sr} + P_{n}^r\gamma_n^{rm})(\sigma^2 + P_{n}^s\gamma_n^{sr} + P_{n}^r\gamma_n^{re})}.
	\end{eqnarray} 
	
	Likewise, equating first order derivative of Lagrangian w.r.t. $P_n^r$ to zero, we get:
	\begin{align}\label{lagrange_derivative_with_prn}
	&\mu = \frac{1}{2} \frac{P_n^s \gamma_n^{sr} (\gamma_n^{rm}-\gamma_n^{re})}{(\sigma^2 + P_{n}^s\gamma_n^{sr} + P_{n}^r\gamma_n^{rm})(\sigma^2 + P_{n}^s\gamma_n^{sr} + P_{n}^r\gamma_n^{re})} \nonumber \\
	& \qquad \qquad . \frac{\left(\sigma^4+\gamma_n^{sr}P_n^s\sigma^2-\gamma_n^{rm}\gamma_n^{re}(P_n^r)^2 \right)}{(\sigma^2 + P_{n}^r\gamma_n^{rm})(\sigma^2 +  P_{n}^r\gamma_n^{re})}
	\end{align} 
	Now 
	we have an important result describing the utilization of source and relay power budgets.

	\begin{proposition}\label{Af_relay_PS_PR_budget}
		In {\color{black}an} AF relay-assisted secure OFDMA system, source power budget is fully utilized. Relay power budget may not be fully utilized, with the allocation such that $P_n^r \leq P_n^{r^*}$.
	\end{proposition}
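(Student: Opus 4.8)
The plan is to prove the two claims separately, leaning on the monotonicity/unimodality facts already established in Proposition~\ref{AF_secure_rate_nature} together with a feasibility‑preserving exchange argument applied to an optimal point of $\mathcal{P}2$; no joint generalized‑convexity of $\mathcal{P}2$ is needed for this particular statement, only the one‑variable‑at‑a‑time structure of each term $R_{s_n}^m$.

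\emph{Step 1 (source budget is tight).} From \eqref{rate_derivative_psn}, since $\gamma_n^{rm}>\gamma_n^{re}$ we have $\partial\mathcal{O}_n^m/\partial P_n^s>0$ for every $P_n^r>0$, so each term $R_{s_n}^m$ is strictly increasing in $P_n^s$. Suppose at an optimal solution $\sum_{n=1}^N P_n^s<P_S$. Then there is slack $\delta>0$, and adding $\delta$ to any $P_n^s$ for which the paired $P_n^r>0$ keeps $C_{2,1}$--$C_{2,3}$ satisfied while strictly increasing the objective, contradicting optimality. Hence $\sum_{n=1}^N P_n^s=P_S$. Equivalently, \eqref{lagrange_derivative_with_psn} gives $\lambda>0$, so complementary slackness forces $C_{2,1}$ active.

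\emph{Step 2 (per‑subcarrier cap $P_n^r\le P_n^{r^*}$).} Fix $\{P_n^s\}$ at its optimal value. By Proposition~\ref{AF_secure_rate_nature}, $R_{s_n}^m$ is pseudoconcave in $P_n^r$ with the unique maximizer $P_n^{r^*}$ of \eqref{optimal_prn_for_psn}, hence non‑decreasing on $[0,P_n^{r^*}]$ and strictly decreasing on $(P_n^{r^*},\infty)$. If some optimal $P_n^r>P_n^{r^*}$, replacing it by $P_n^{r^*}$ only lowers the total relay power (so $C_{2,2}$, $C_{2,3}$ still hold), strictly raises the $n$‑th term, and leaves every other term untouched — a contradiction. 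Thus every optimal allocation satisfies $P_n^r\le P_n^{r^*}$ for all $n$. The same follows from the sign of the bracket in \eqref{lagrange_derivative_with_prn}: at the optimum $\mu\ge 0$ and all remaining factors are positive, so $\sigma^4+\gamma_n^{sr}P_n^s\sigma^2-\gamma_n^{rm}\gamma_n^{re}(P_n^r)^2\ge 0$, i.e. $P_n^r\le P_n^{r^*}$.

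\emph{Step 3 (relay budget need not be tight).} Summing Step~2 and using $P_n^s\le P_S$ gives $\sum_{n}P_n^r\le\sum_{n}P_n^{r^*}\le\sum_{n}\sqrt{(\sigma^4+P_S\gamma_n^{sr}\sigma^2)/(\gamma_n^{rm}\gamma_n^{re})}=:P_R^{\mathrm{th}}$, a finite channel‑dependent threshold. Whenever $P_R\ge P_R^{\mathrm{th}}$ the relay can set $P_n^r=P_n^{r^*}$ on every subcarrier, which is then optimal, and the consumed relay power is at most $P_R^{\mathrm{th}}\le P_R$, possibly strictly — so the relay budget may remain partly unused (equivalently $\mu=0$, $C_{2,2}$ inactive, in that regime).

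\textbf{Main obstacle.} The one delicacy is that $P_n^{r^*}$ itself depends on the jointly optimized $P_n^s$, so the cap in Step~2 is implicit; I would stress that the exchange argument is performed after freezing $\{P_n^s\}$ at an optimal value, which is legitimate because decreasing a single $P_n^r$ preserves feasibility and the objective is separable across subcarriers, so no other term is affected. A minor secondary point is the degenerate boundary case $P_n^s=0$ (or $P_n^r=0$) on some subcarrier, where the derivative expressions vanish identically; there the $n$‑th term is identically zero and the claim is trivial, so it is handled separately.
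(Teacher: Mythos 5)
Your proof is correct and reaches both conclusions of the proposition. The paper's own argument runs entirely through the KKT multipliers: from \eqref{lagrange_derivative_with_psn}, $\lambda=0$ would force $P_n^r=0$ (``no communication''), hence $\lambda>0$ and complementary slackness makes $C_{2,1}$ active; and since $\mu\ge 0$ while every other factor in \eqref{lagrange_derivative_with_prn} is positive, the bracket $\sigma^4+\gamma_n^{sr}P_n^s\sigma^2-\gamma_n^{rm}\gamma_n^{re}(P_n^r)^2$ must be nonnegative, i.e.\ $P_n^r\le P_n^{r^*}$, with the budget exhausted only when $\mu>0$. You recover exactly these multiplier arguments as your ``equivalently'' asides, but your primary route is a primal perturbation/exchange argument: strict monotonicity of each term in $P_n^s$ rules out source-budget slack, and unimodality in $P_n^r$ lets you replace any $P_n^r>P_n^{r^*}$ by $P_n^{r^*}$ while preserving feasibility and strictly improving the (separable) objective. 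This buys something the paper's version does not: it does not presuppose that the optimum is a KKT point, so it is independent of Theorem~\ref{Concavity_Theorem} and of any constraint qualification; your Step~3 also makes explicit a finite threshold $P_R^{\mathrm{th}}=\sum_n\sqrt{(\sigma^4+P_S\gamma_n^{sr}\sigma^2)/(\gamma_n^{rm}\gamma_n^{re})}$ beyond which the relay constraint is provably slack, which the paper only conveys implicitly through the case $\mu=0$. Finally, your separate treatment of the degenerate boundary cases ($P_n^s=0$ or $P_n^r=0$) is more careful than the paper's informal ``neglecting the possibility of no communication''.
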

	
	\begin{proof}
		From \eqref{lagrange_derivative_with_psn} it can be noted that $\lambda=0$ results in $P_n^r = 0$. Since there is no direct connectivity between source and users, $P_n^r = 0$ means `no communication'. Thus, $\lambda>0$, which means that the source power budget constraint will always be active, i.e., $ \sum_{n=1}^N P_{n}^s = P_S $. Hence, the source power budget will be fully utilized.
		
		From \eqref{lagrange_derivative_with_prn} we note that $\mu=0$ results in either $P_n^s = 0$, or $\gamma_n^{rm}\gamma_n^{re}(P_n^r)^2 = \sigma^4+\gamma_n^{sr}P_n^s\sigma^2$. The first condition is a `no communication' scenario, and the second condition indicates optimal relay power allocation (cf. \eqref{optimal_prn_for_psn}) on all subcarriers. Neglecting the possibility of `no communication', $\mu=0$ results in optimal relay power allocation $P_n^r=P_n^{r^*}$ on all subcarriers. $\mu=0$ indicates that the relay power constraint is  inactive, i.e., $\sum_{n=1}^N P_{n}^r < P_R$, which means there is enough relay power budget to allow optimal power allocation on each subcarrier. On the other hand $\mu>0$ indicates $\sum_{n=1}^N P_{n}^r = P_R$, which results in $\gamma_n^{rm}\gamma_n^{re}(P_n^r)^2 < \sigma^4+\gamma_n^{sr}P_n^s\sigma^2$ (cf. \eqref{lagrange_derivative_with_prn}), i.e.,   $P_n^r<P_n^{r^*}$. Thus, relay power allocation is either $P_n^r=P_n^{r^*}$ or $P_n^r<P_n^{r^*}$ depending on the relay power budget $P_R$. These conditions also corroborates the concavity and pseudoconcavity of secure rate with $P_n^s$ and $P_n^r$, respectively, given in Proposition \ref{AF_secure_rate_nature}.
	\end{proof}
	
	Theorem \ref{Concavity_Theorem} below describes that power allocation $\mathcal{P}2$ is a generalized convex problem.
	
	\begin{theorem}\label{Concavity_Theorem}
		In {\color{black}an} AF relayed communication system, joint power allocation is generalized convex in $P_{n}^s$ and $P_{n}^r$ under practical scenarios, and has global optimal solution $P_{n}^{s^*}, P_{n}^{r^*}$.
	\end{theorem}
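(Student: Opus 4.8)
The plan is to exploit the separability of $\mathcal{P}2$ across subcarriers together with Propositions~\ref{AF_secure_rate_nature} and~\ref{Af_relay_PS_PR_budget}. The objective $R_s=\sum_{n=1}^N R_{s_n}^m$ is a sum whose $n$th summand depends only on the block $(P_n^s,P_n^r)$, and the feasible set cut out by $C_{2,1}$--$C_{2,3}$ is convex and compact. It therefore suffices to: (i) show that each per-subcarrier rate $R_{s_n}^m(P_n^s,P_n^r)$ is \emph{jointly} generalized concave on the practically relevant domain; (ii) lift this to $R_s$ over the product domain; and (iii) conclude that a KKT point of $\mathcal{P}2$ is a global maximizer and that it is attained, thereby exhibiting $(P_n^{s^*},P_n^{r^*})$.

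For step~(i) I would first invoke Proposition~\ref{Af_relay_PS_PR_budget} to restrict attention to the domain $\mathcal{D}_n=\{(P_n^s,P_n^r):P_n^s\ge 0,\ 0\le P_n^r\le P_n^{r^*}\}$, with $P_n^{r^*}$ as in~\eqref{optimal_prn_for_psn}; this is the region that contains the optimizer, and on it the operand $\mathcal{O}_n^m$ is nondecreasing in $P_n^r$ because a factor in the numerator of~\eqref{rate_derivative_prn} is nonnegative there. Proposition~\ref{AF_secure_rate_nature} already supplies concavity in $P_n^s$ and pseudoconcavity in $P_n^r$ separately; to get the joint property I would differentiate~\eqref{rate_derivative_psn}--\eqref{rate_derivative_prn} once more to form the $2\times2$ Hessian $\nabla^2 R_{s_n}^m$ and show, using $\gamma_n^{rm}>\gamma_n^{re}$ and $P_n^r\le P_n^{r^*}$, that it is negative semidefinite on $\mathcal{D}_n$: the entry $\partial^2 R_{s_n}^m/\partial (P_n^s)^2\le 0$ is already known from Proposition~\ref{AF_secure_rate_nature}, and the remaining work is the sign of $\partial^2 R_{s_n}^m/\partial (P_n^r)^2$ and of the $2\times2$ principal minor of $\nabla^2 R_{s_n}^m$ on $\mathcal{D}_n$. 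Where negative definiteness degenerates---in particular on the face $P_n^r=P_n^{r^*}$, where $\partial R_{s_n}^m/\partial P_n^r=0$---I would fall back on the pseudoconcavity of Proposition~\ref{AF_secure_rate_nature}, so that every stationary point of $R_{s_n}^m$ in $\mathcal{D}_n$ is a global maximum; this is precisely the notion of generalized concavity from~\cite{Baz}.

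For step~(ii), because the blocks $(P_n^s,P_n^r)$ are pairwise disjoint and the only coupling among them in $C_{2,1}$ and $C_{2,2}$ is linear, any line segment in the product feasible set projects to a line segment inside each $\mathcal{D}_n$; concavity (respectively, the stationarity structure) of each $R_{s_n}^m$ along such segments then transfers to $R_s$, so $R_s$ is generalized concave over the feasible region of $\mathcal{P}2$. For step~(iii), closedness and boundedness of that region (via $P_S$ and $P_R$) plus continuity of $R_s$ give a maximizer; generalized concavity then makes the KKT system---the stationarity relations~\eqref{lagrange_derivative_with_psn}--\eqref{lagrange_derivative_with_prn} together with $C_{2,1}$--$C_{2,3}$, complementary slackness, and the multiplier signs $\lambda>0$, $\mu\ge 0$ established in Proposition~\ref{Af_relay_PS_PR_budget}---both necessary and sufficient for global optimality, which identifies the solution $(P_n^{s^*},P_n^{r^*})$.

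The hard part is step~(i): showing that concavity in $P_n^s$ and mere pseudoconcavity in $P_n^r$ amalgamate into joint generalized concavity on $\mathcal{D}_n$. A function that is generalized concave in each argument separately need not be jointly so, and a separable sum of pseudoconcave functions can itself fail pseudoconcavity; the argument therefore hinges on the explicit sign analysis of the cross derivative $\partial^2 R_{s_n}^m/\partial P_n^s\,\partial P_n^r$ and of the $2\times2$ minor of $\nabla^2 R_{s_n}^m$, where the inequalities $\gamma_n^{rm}>\gamma_n^{re}$ and $P_n^r\le P_n^{r^*}$ are exactly what makes the signs work out. The qualifier ``under practical scenarios'' in the statement is what licenses restricting to $\mathcal{D}_n$ through Proposition~\ref{Af_relay_PS_PR_budget} rather than working on the full nonnegative orthant.
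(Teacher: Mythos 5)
Your overall route is the same as the paper's (see \ref{app1}): restrict to the practical domain where $\gamma_n^{rm}>\gamma_n^{re}$ (from the subcarrier allocation \eqref{subcarrier_alloc_relay}) and $P_n^r\le P_n^{r^*}$ (from Proposition \ref{Af_relay_PS_PR_budget}), establish joint concavity per subcarrier via a $2\times 2$ Hessian, exploit separability across subcarriers together with the affine constraints, and conclude global optimality of the KKT point. One cosmetic difference: the paper does not form $\nabla^2 R_{s_n}^m$ directly; it computes the Hessian of the \emph{operand} $\mathcal{O}_n^m$ (i.e., differentiates \eqref{rate_derivative_psn}--\eqref{rate_derivative_prn}, which are derivatives of $\mathcal{O}_n^m$, not of $R_{s_n}^m$) and then invokes composition with the concave increasing $\log$. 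Your text conflates the two Hessians; they differ by a rank-one correction term, and the operand-plus-composition route is the cleaner one to actually carry out. Your fallback to pseudoconcavity on the face $P_n^r=P_n^{r^*}$ is unnecessary once negative semidefiniteness (not definiteness) is established on the whole practical region.

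The genuine gap is in your claim that $\gamma_n^{rm}>\gamma_n^{re}$ and $P_n^r\le P_n^{r^*}$ are ``exactly what makes the signs work out'' for the $2\times 2$ principal minor. They are not sufficient. From \eqref{eq:Prf4}, the determinant contains the product $\left(\gamma_n^{rm}\gamma_n^{re}(P_n^r)^2-\sigma^4\right)\left(-\gamma_n^{rm}\gamma_n^{re}(P_n^r)^2+4\gamma_n^{sr}P_n^s\sigma^2+\sigma^4\right)$; the second factor is nonnegative under $P_n^r\le P_n^{r^*}$, but the first requires the \emph{additional} lower bound $\gamma_n^{rm}\gamma_n^{re}\left(P_n^r/\sigma^2\right)^2\ge 1$, which the paper justifies separately as an operating-SNR assumption (it is part of what ``under practical scenarios'' means in the theorem statement, alongside $P_n^r\le P_n^{r^*}$). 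Without this third condition the first factor can be negative and your sign argument for the minor does not close; you would need either to impose that condition explicitly or to argue that the remaining positive term $4\gamma_n^{sr}P_n^s\sigma^4(\gamma_n^{re}P_n^r+\sigma^2)$ dominates, which is not true in general. So the plan is right, but the list of hypotheses driving the determinant's sign is incomplete.
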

	\begin{proof}
		See \ref{app1}. 
	\end{proof}
	
	As $\mathcal{P}2$ is a generalized convex optimization problem, its global optimal solution is obtained by solving the KKT conditions. In order to find out {\color{black}the} joint optimal solution $(P_{n}^{s^*}, P_{n}^{r^*})$, first order derivative equations \eqref{lagrange_derivative_with_psn} and \eqref{lagrange_derivative_with_prn} which represent subgradient conditions are to be solved along with the complimentary slackness conditions $\lambda \left( \sum_{n=1}^N P_{n}^s - P_S \right) = 0$, and $\mu \left( \sum_{n=1}^N P_{n}^r - P_R \right) = 0$. Observing \eqref{lagrange_derivative_with_psn} and \eqref{lagrange_derivative_with_prn}, we note that the two conditions are tightly interconnected such that it is not possible to obtain closed form analytical solution for  $P_{n}^s$ and $P_{n}^r$. However, the optimal solution can be found by a two-dimensional search for optimal $\lambda$ and $\mu$, using either subgradient method or any convex problem solver.

	\subsection{Subcarrier and Power Allocation in DF Relay}\label{sec_sca_pa_df}
	For a DF relayed secure cooperative communication system, the resource allocation problem involving subcarrier and power allocation is presented in \cite{RSaini_CL_2016}. The concepts and key contributions have been described in the following subsection. The secure rate in {\color{black}a} DF relayed system over a subcarrier $n$ is 
	\begin{equation}\label{sec_rate_df_def_1}
	R_{s_{n}}^m = \frac{1}{2} \left\{ \min{ (R_{n}^{sr}, R_{n}^{rm} ) } - R_{n}^{re} \right\}^+.
	\end{equation}
	where $R_{n}^{sr}$ and $R_{n}^{rm}$ respectively denote the rates of $\mathcal{S}-\mathcal{R}$ and $\mathcal{R}-\mathcal{U}_m$ link.
	In a DF relayed system, rates $R_{n}^{sr}$ and $R_{n}^{rm}$ are given by 
	$\log_2 \left( 1+ \frac{P_{n}^s\gamma_{n}^{sr}}{\sigma^2}\right)$ and
	$\log_2 \left( 1+ \frac{P_{n}^r\gamma_{n}^{rm}}{\sigma^2}\right)$, 
	respectively. 
	Optimal subcarrier allocation for DF relayed system \cite[eq. (5)]{RSaini_CL_2016} is the same as that of {\color{black}an} AF relayed system \eqref{subcarrier_alloc_relay}.
	The optimization problem for sum secure rate maximization with individual power budgets constraints is given as \cite{RSaini_CL_2016}:
	
	\noindent \begin{align}\label{opt_prob_rate_max_simplified_obj}
	& \mathcal{P}3: \underset{P_{n}^s, P_{n}^r, t_n,\,\forall n} {\text{maximize}} \left[ \widehat{R_{s}} \triangleq  \sum_{n=1}^N \frac{1}{2} \left \{ \log_2 \left( \frac{1+ t_n} {1+ \frac{P_{n}^r \gamma_{n}^{re}}{\sigma^2}} \right) \right \}  \right], ~~\text{subject to: } \nonumber \\ 
	& C_{3,1}: t_n \leq \frac{P_{n}^s\gamma_{n}^{sr}}{\sigma^2} \text{ } \forall n,  \quad C_{3,2}: t_n \leq \frac{P_{n}^r\gamma_{n}^{rm}}{\sigma^2} \text{ } \forall n, \qquad C_{3,3}: \sum_{n=1}^N P_{n}^s \leq P_S,\nonumber \\ 
	& C_{3,4}: \sum_{n=1}^N P_{n}^r \leq P_R, \qquad C_{3,5}: P_{n}^r \gamma_{n}^{re} \le P_{n}^s \gamma_{n}^{sr} \text{ } \forall n 
	\quad ~C_{3,6}: P_{n}^s\ge 0, P_{n}^r\ge 0 \text{ } \forall n  
	\end{align} 
	$C_{3,1}$ and $C_{3,2}$ are from the definition of $min$ function. $C_{3,3}$ and $C_{3,4}$ are power budget constraints. $C_{3,5}$ comes from secure rate positivity constraints, and $C_{3,6}$ captures power budget constraints. Theorem 1 in \cite{RSaini_CL_2016} states that, for energy-efficient optimal power allocation over each subcarrier $t_n = \frac{P_{n}^s \gamma_{n}^{sr}}{\sigma^2} = \frac{P_{n}^r \gamma_{n}^{rm}}{\sigma^2}$. Now,  the power allocation problem gets simplified as: 
	\begin{align}\label{opt_prob_rate_max_simplified_obj2}
	& \mathcal{P}4: \underset{P_{n}^r,\,\forall n} {\text{maximize}} \left[ \sum_{n=1}^N \frac{1}{2} \left \{ \log_2 \left( \frac{\sigma^2+ P_{n}^r \gamma_{n}^{rm}} {\sigma^2+ P_{n}^r \gamma_{n}^{re}} \right) \right \}  \right] \nonumber \\ 
	&\text{subject to: }   C_{4,1}: \sum_{n=1}^N  \frac{P_{n}^r \gamma_{n}^{rm}}{\gamma_{n}^{sr}} \leq P_S, \quad C_{4,2}: \sum_{n=1}^N P_{n}^r \leq P_R, \quad C_{4,3}: P_{n}^r\ge 0 \text{ } \forall n.  
	\end{align} 
	As noted in \cite{RSaini_CL_2016}, $\mathcal{P}4$ belongs to the class of generalized convex problems. KKT conditions can be used to find the optimal solution of $\mathcal{P}4$. The Lagrangian of the problem is:
	\begin{align}\label{lagrange_rate_maximization}
	&\mathcal{L}_4 = \sum_{n=1}^N \frac{1}{2} \left \{ \log_2 \left( \frac{\sigma^2+ P_{n}^r \gamma_{n}^{rm}} {\sigma^2+ P_{n}^r \gamma_{n}^{re}} \right) \right \} - \lambda \left( \sum_{n=1}^N  \frac{P_{n}^r \gamma_{n}^{rm}}{\gamma_{n}^{sr}} - P_S \right) - \mu \left( \sum_{n=1}^N P_{n}^r - P_R \right).
	\end{align}
	Setting the first {\color{black}order} derivative of $\mathcal{L}_4$ w.r.t.  $P_{n}^r$ to zero, we obtain:
	\begin{eqnarray}\label{power_allocation_df_analytical}
	\frac{\sigma^2 \left(\gamma_{n}^{rm}-\gamma_{n}^{re}\right)} {2\left(\sigma^2 + P_{n}^r\gamma_{n}^{rm}\right) \left(\sigma^2 + P_{n}^r\gamma_{n}^{re} \right)} = \mu + \lambda \left( \frac{\gamma_{n}^{rm}}{\gamma_{n}^{sr}} \right) \text{ } \forall n.
	\end{eqnarray}
	
	Optimal $P_{n}^r$ is obtained as a single positive real root of \eqref{power_allocation_df_analytical}, and $P_{n}^s$ is obtained using  $P_{n}^s \gamma_{n}^{sr} = P_{n}^r \gamma_{n}^{rm}$. After obtaining the joint optimal subcarrier allocation and power allocation policies for both AF and DF relayed systems, we next consider the optimal subcarrier pairing policy and its utility in improving the sum secure rate further.
	
	\section{Optimal Subcarrier Pairing as Effective Channel Gain Tailoring}\label{sec_opt_scp}
	The concept of {\color{black}pairing} subcarrier $n$ on $\mathcal{S}-\mathcal{R}$ link with any of the subcarrier $o$ on $\mathcal{R}-\mathcal{U}$ link is referred as subcarrier pairing (SCP). This introduces another degree of freedom, resulting in improved system performance. This performance gain is achieved at the cost of combinatorial aspect added because of SCP, which makes {\color{black}the} joint resource allocation problem more complex. In fact, an $N$ subcarrier based two hop cooperative system has $N!$ possible SCP combinations.
	
	This section presents a near optimal subcarrier pairing scheme designed specifically for secure OFDMA based communication system. An optimal SCP is supposed to match subcarriers on two hops for maximizing the secure rate over each subcarrier. The scheme of pairing sorted gains on $\mathcal{S}-\mathcal{R}$ link with the gains on $\mathcal{R}-\mathcal{U}$ link, proposed for non-secure OFDMA \cite{Hottinen_SPAWC_2007}, and hereafter referred as 'ordered pairing' (OP) is not suitable for secure systems as secure rate definition involves gains of the main user {\color{black}as well as} the eavesdropper. Observing that finding effective channel gain analytically, in a secure OFDMA system, is non-trivial, effective channel gain is obtained in high SNR region. Note that even if water filling schemes for secure OFDMA and normal OFDMA are different, like normal water filling, secure water filling provides more power to the subcarrier with larger effective channel gain. 
	
	SCP can help in improving either {\color{black}the} spectral efficiency, or {\color{black}the} energy efficiency, or both. First, we discuss SCP for a DF relayed system in Section \ref{scp_df} because in this case SCP helps in improving either spectral efficiency or energy  efficiency. We discuss SCP for {\color{black}an} AF relayed communication system in Section \ref{scp_af}, where it improves the spectral efficiency. A brief comparison of resource allocation schemes for {\color{black}both} AF and DF relayed communication systems is presented in Section \ref{comparison_af_df}, and complexity analysis is presented in section \ref{complexity_analysis}.
	
	\begin{remark}
		The term effective channel gain, over a subcarrier pair $(n,o)$, is used to refer to the end to end channel gain  involving the channel gains of both $\mathcal{S}-\mathcal{R}$ and $\mathcal{R}-\mathcal{U}$ links. 
	\end{remark}
	
	\begin{remark} Using conventional definitions of efficiency, improvement in efficiency implies saving of the input resource for achieving a fixed output utility. Thus, improvement in energy efficiency means using lesser power for  realizing a given rate, and improvement in spectral efficiency implies higher secure rate for the same power budget.
	\end{remark}
	
	\subsection{Optimal SCP for DF Relay System}\label{scp_df}
	In order to estimate the benefits of SCP, we need to investigate the possibility of improvement in the system secure rate performance after optimal power allocation. Next, we discuss all power allocation cases in a DF relayed system. From \eqref{power_allocation_df_analytical}, note that depending on relay and source power budgets, there could be three scenarios: (a) $\lambda=0$, $\mu>0$; (b) $\lambda>0$, $\mu=0$; (c) $\lambda>0$, $\mu>0$. In the following we discuss each of these cases in detail.
	\subsubsection{Case (a): $\lambda=0$, $\mu>0$}
	The source power budget constraint is inactive since $\lambda=0$. $\mu>0$ implies that relay power $P_R$ gets used completely and is bottleneck. In this  case, we have following observation. 
	\begin{proposition}\label{prop_df_pr_budget}
In a DF relay assisted secure communication system, the maximum secure rate is controlled by relay power budget $P_R$ when there is enough source power budget $P_S$. In this case, SCP assumes an important role of improving energy efficiency, and helps in achieving secure rate bound by using lesser transmit power. 
	\end{proposition}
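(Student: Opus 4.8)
The plan is to split the statement into its two assertions: (i) in Case~(a) the optimal secure rate depends only on $P_R$ (and on the $\mathcal{R}-\mathcal{U}$ gains), and (ii) among all subcarrier pairings attaining this optimal rate, the one matching ``relay-side power demand'' to ``$\mathcal{S}-\mathcal{R}$ link strength'' minimises the source power, so SCP delivers an energy-efficiency gain with no rate loss.

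For (i), I would start from the stationarity condition \eqref{power_allocation_df_analytical} and set $\lambda=0$, which is the defining feature of Case~(a). It collapses to $\frac{\sigma^2(\gamma_n^{rm}-\gamma_n^{re})}{2(\sigma^2+P_n^r\gamma_n^{rm})(\sigma^2+P_n^r\gamma_n^{re})}=\mu$, a quadratic in $P_n^r$ whose unique nonnegative root $P_n^{r^*}=P_n^{r^*}(\mu;\gamma_n^{rm},\gamma_n^{re})$ involves neither $\gamma_n^{sr}$ nor $P_S$; moreover this root is non-increasing in $\mu$. Since $\mu>0$, complementary slackness makes the relay budget bind, $\sum_n P_n^{r^*}(\mu;\cdot)=P_R$, and monotonicity pins $\mu$ down uniquely as a function of $P_R$ alone. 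Substituting back, the attained objective $\sum_n\frac12\log_2\!\left(\frac{\sigma^2+P_n^{r^*}\gamma_n^{rm}}{\sigma^2+P_n^{r^*}\gamma_n^{re}}\right)$ depends only on $P_R$ and the $\mathcal{R}-\mathcal{U}$ gains; hence the maximum secure rate is controlled by $P_R$, and increasing $P_S$ cannot raise it.

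For (ii), recall from Theorem~1 of \cite{RSaini_CL_2016} that at optimality the energy-efficient balance $P_n^s\gamma_n^{sr}=P_n^r\gamma_n^{rm}$ holds on every active subcarrier. If SCP maps the $\mathcal{R}-\mathcal{U}$ subcarrier $n$ to the $\mathcal{S}-\mathcal{R}$ subcarrier $\pi(n)$, the secure-rate expression still only sees $(\gamma_n^{rm},\gamma_n^{re})$, so the optimal relay powers $P_n^{r^*}$ and therefore the secure rate are invariant under $\pi$; only the source power changes, to $\Phi(\pi)=\sum_n P_n^{r^*}\gamma_n^{rm}/\gamma_{\pi(n)}^{sr}$. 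Writing $a_n=P_n^{r^*}\gamma_n^{rm}>0$, minimising $\Phi(\pi)=\sum_n a_n/\gamma_{\pi(n)}^{sr}$ over permutations is precisely the assignment settled by the rearrangement inequality: $\Phi$ is smallest exactly when $\{a_n\}$ and $\{\gamma_{\pi(n)}^{sr}\}$ are similarly ordered, i.e.\ the largest relay-side demand is paired with the strongest $\mathcal{S}-\mathcal{R}$ gain, and this minimum is at most the value $\Phi(\mathrm{id})$ under default pairing, strictly less unless the two orderings already coincide. So the same secure-rate bound is met with no more (generically strictly less) source power, which is the claimed energy-efficiency improvement.

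The main obstacle is not a heavy computation but the decoupling bookkeeping: one must check that re-pairing keeps the problem inside Case~(a), namely that the relabelled constraint $C_{4,1}$ stays slack because $\Phi(\pi^*)\le\Phi(\mathrm{id})<P_S$, so that the $\lambda=0$ KKT system, and hence $P_n^{r^*}$ and the rate, are genuinely unchanged by SCP. With that established, part (i) is immediate from \eqref{power_allocation_df_analytical} and part (ii) is a one-line rearrangement argument. A minor point is the inactive subcarriers with $\gamma_n^{rm}\le\gamma_n^{re}$, which contribute zero rate and zero power and can be omitted from both sums.
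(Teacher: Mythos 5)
Your proposal is correct, and the first half follows the paper's route exactly: set $\lambda=0$ in \eqref{power_allocation_df_analytical}, obtain the quadratic \eqref{power_allocation_df_analytical_case_a} whose positive root involves only $\mu$ and the $\mathcal{R}-\mathcal{U}$ gains, and fix $\mu$ by $\sum_o P_o^r = P_R$, so the attainable rate is governed by $P_R$ alone. (Your added observation that the root is monotone in $\mu$, so $\mu$ is uniquely determined by $P_R$, is a small tightening the paper leaves implicit.)

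For the energy-efficiency half you take a genuinely different and stronger route. The paper argues verbally that a high-$\gamma_n^{sr}$ subcarrier should be matched with a high-$P_o^r\gamma_o^{rm}$ subcarrier, and then substantiates this only with a two-subcarrier swap computation showing $P_{S_1}-P_{S_2}<0$; strictly speaking that establishes local pairwise optimality, not optimality over all $N!$ permutations. Your formulation — the rate is invariant under the pairing $\pi$ because the objective of $\mathcal{P}4$ sees only $(\gamma_n^{rm},\gamma_n^{re})$, while the source power cost is $\Phi(\pi)=\sum_n P_n^{r^*}\gamma_n^{rm}/\gamma_{\pi(n)}^{sr}$, minimised by the rearrangement inequality when the demands $a_n=P_n^{r^*}\gamma_n^{rm}$ are similarly ordered with the gains $\gamma_{\pi(n)}^{sr}$ — proves global optimality of the sorted matching in one step and directly yields the paper's Remark 5. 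You also make explicit a consistency check the paper skips: that re-pairing cannot leave Case (a), since $\Phi(\pi^*)\le\Phi(\mathrm{id})<P_S$ keeps $C_{4,1}$ slack and hence the KKT system, the $P_n^{r^*}$, and the rate unchanged. Both arguments reach the same conclusion; yours is the more complete justification of the claim as stated for general $N$.
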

	\begin{proof}
		Using $\lambda=0$ in \eqref{power_allocation_df_analytical}, after simplification we get  
		\begin{eqnarray}\label{power_allocation_df_analytical_case_a}
		\frac{\sigma^2(\gamma_{o}^{rm}-\gamma_{o}^{re})} {2\left(\sigma^2 + P_{o}^r\gamma_{o}^{rm}\right) \left(\sigma^2 + P_{o}^r\gamma_{o}^{re} \right)} = \mu,  \text{ } \forall o.
		\end{eqnarray}
		\eqref{power_allocation_df_analytical_case_a} leads to a quadratic in $P_o^r$, and the optimal solution is the positive real root of the quadratic. $\mu$ is obtained such that $\sum_{o=1}^N P_{o}^r = P_R$. Note that, $P_o^r$ depends on  $P_R$ and not on $P_S$. Thus, achievable maximum rate on each subcarrier, and thereby possible maximum sum secure rate depends on $P_R$. Next, we show that this upper bound on secure rate is dependent on available source power and utilization of optimal SCP.
				
		SCP has a limited role in this case because maximum rate is controlled by the $\mathcal{R}-\mathcal{U}$ link. Here SCP can help in achieving the secure rate upper bound but not beyond, thus limited role in spectral efficiency. While SCP has key role in terms of energy efficiency. Through optimal power allocation in DF system, i.e., $P_{n}^s\gamma_{n}^{sr}=P_{o}^r\gamma_{o}^{rm}$ the same SNR is ensured on subcarrier $n$ over $\mathcal{S}-\mathcal{R}$ link and subcarrier $o$ over $\mathcal{R}-\mathcal{U}$ link. Thus, a subcarrier having high $\gamma_{n}^{sr}$ on $\mathcal{S}-\mathcal{R}$ link should be matched with a subcarrier having high $P_{o}^r\gamma_{o}^{rm}$ on $\mathcal{R}-\mathcal{U}$ link, otherwise less  source power will be left for the remaining subcarriers. If SCP is not optimally done then source power budget will get bottlenecked, and the achievable rate will be lower. Thus, SCP helps in better spectral efficiency by obtaining the rate upper bound. 

		Let us discuss the role of SCP in improving energy efficiency through an example. Let there be two subcarriers on $\mathcal{S}-\mathcal{R}$ link with gains such that $\gamma_{1}^{sr}>\gamma_{2}^{sr}$. Over $\mathcal{R}-\mathcal{U}$ link, let the power allocation on these two subcarriers be such that $P_{1}^r\gamma_{1}^{rm}>P_{2}^r\gamma_{2}^{rm}$. Let us consider two pairing scenarios. In first, subcarrier-1 on $\mathcal{S}-\mathcal{R}$ link gets paired with subcarrier-1 on $\mathcal{R}-\mathcal{U}$ link, and in second, subcarrier-1 on $\mathcal{S}-\mathcal{R}$ link gets paired with subcarrier-2 on $\mathcal{R}-\mathcal{U}$ link. Sum source power requirements for the two scenarios are:   		
		
		\begin{eqnarray}\label{power_req_df_case_a_2u2c}
		P_{S_1} = \frac{P_{1}^r\gamma_{1}^{rm}}{\gamma_{1}^{sr}}+\frac{P_{2}^r\gamma_{2}^{rm}}{\gamma_{2}^{sr}}; P_{S_2} = \frac{P_{1}^r\gamma_{1}^{rm}}{\gamma_{2}^{sr}}+\frac{P_{2}^r\gamma_{2}^{rm}}{\gamma_{1}^{sr}}.
		\end{eqnarray} 
		Using \eqref{power_req_df_case_a_2u2c}, $P_{S_{\Delta}}  = P_{S_1} - P_{S_2}$, can be simplified as: 
		\begin{equation}
		P_{S_{\Delta}} = (P_{2}^r\gamma_{2}^{rm}-P_{1}^r\gamma_{1}^{rm})\left( \frac{\gamma_{1}^{sr}-\gamma_{2}^{sr}}{\gamma_{1}^{sr}\gamma_{2}^{sr}} \right) < 0
		\end{equation}
		Note that the source power requirement in second scheme is more. Thus, the  scheme which matches a higher $\gamma_{n}^{sr}$ subcarrier with a subcarrier having higher $P_{o}^r\gamma_{o}^{rm}$ is energy-efficient. 
	\end{proof}
	
	\begin{remark}\label{rm_scp_df_case1}
		For a bottle-necked $P_R$ budget case, optimal SCP matches the sorted $\gamma_{n}^{sr}$ on $\mathcal{S}-\mathcal{R}$ link and  $P_{o}^r\gamma_{o}^{rm}$ on $\mathcal{R}-\mathcal{U}$ link, one by one. This schemes requires lesser source power, and hence energy efficiency gets improved.
	\end{remark}
	
	\subsubsection{Case (b): $\lambda>0$, $\mu=0$}
	$\lambda>0$ implies that source power budget $P_S$ gets fully utilized and is the bottleneck. Relay power budget constraint is inactive. Placing $P_{o}^r$ as $P_{n}^s\gamma_{n}^{sr}/\gamma_{o}^{rm}$ and $\mu=0$, \eqref{power_allocation_df_analytical} is simplified:
	\begin{eqnarray}\label{power_allocation_df_analytical_case_b}
	\frac{\sigma^2(\gamma_{n}^{sr}-\gamma_{n}^{sr'})} {2\left(\sigma^2 + P_{n}^s\gamma_{n}^{sr}\right) \left(\sigma^2 + P_{n}^s\gamma_{n}^{sr'} \right)} = \lambda,  \text{ } \forall n
	\end{eqnarray}
	where $\gamma_{n}^{sr'} \triangleq \frac{\gamma_{n}^{sr}\gamma_{o}^{re}}{\gamma_{o}^{rm}}$. Note the similarity of this equation with  \eqref{power_allocation_df_analytical_case_a}. Optimal $P_{n}^s$ is the positive real root of the quadratic equation obtained from \eqref{power_allocation_df_analytical_case_b}. $\lambda$ is obtained such that $\sum_{n=1}^N P_{n}^s = P_S$. Note that $P_{n}^s$ depends on $\gamma_{n}^{sr'}$ which imbibes SCP.  
	$\gamma_{n}^{sr'}$ depends on which subcarrier $n$ on $\mathcal{S}-\mathcal{R}$ link is paired with which one $o$ on $\mathcal{R}-\mathcal{U}$ link. SCP need to  match $\gamma_{n}^{sr}$ with $\gamma_{o}^{rm}$ and $\gamma_{o}^{re}$ for sum rate maximization.  Thus, SCP maximizes the achievable sum rate, and improves  spectral efficiency. This observation is summarized as follows.
	
	\begin{proposition}\label{prop_df_ps_budget}
		In a DF relayed secure communication system with high $P_R$,  necessary condition to achieve higher secure rate is to match a subcarrier $n$ having higher $\gamma_{n}^{sr}$ on $\mathcal{S}-\mathcal{R}$ link with a subcarrier $o$ having higher $\gamma_{o}^{rm}/\gamma_{o}^{re}$ on $\mathcal{R}-\mathcal{U}$ link. 
	\end{proposition}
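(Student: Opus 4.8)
The plan is to work with the Case-(b) simplified objective and to run an exchange (swap) argument on subcarrier pairs, closing the coupling between pairing and power allocation by \emph{re-optimizing} the source power after each swap. First I would record the structure. In Case (b) the relay budget is slack ($\mu=0$), so by the equal-SNR condition $P_n^s\gamma_n^{sr}=P_o^r\gamma_o^{rm}$ the secure rate of a pair $(n,o)$ equals $R(y_n,b_o):=\tfrac12\log_2\frac{\sigma^2+y_n}{\sigma^2+y_n b_o}$, where $y_n\triangleq P_n^s\gamma_n^{sr}$ is the effective SNR and $b_o\triangleq\gamma_o^{re}/\gamma_o^{rm}\in(0,1)$ (with $b_o<1$ guaranteed by the subcarrier-allocation rule \eqref{subcarrier_alloc_relay}), while the active source constraint reads $\sum_n y_n/\gamma_n^{sr}\le P_S$ (cf. $\mathcal{P}4$ in \eqref{opt_prob_rate_max_simplified_obj2} and \eqref{power_allocation_df_analytical_case_b}). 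Hence the pairing enters only through the pair $(\gamma_n^{sr},b_o)$: a larger $\gamma_n^{sr}$ makes a unit of effective SNR cheaper in source power, and since $\partial R/\partial y=\frac{1}{2\ln2}\cdot\frac{\sigma^2(1-b_o)}{(\sigma^2+y)(\sigma^2+y b_o)}$ is decreasing in $b_o$, the marginal value of effective SNR is larger when $b_o$ is smaller, i.e. when $\gamma_o^{rm}/\gamma_o^{re}$ is larger. The two ``preferences'' point at the same subcarrier precisely when a large $\gamma_n^{sr}$ is matched with a large $\gamma_o^{rm}/\gamma_o^{re}$, which is the claimed rule; I would turn this into a contradiction argument.

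Next I would assume an optimal solution contains an \emph{inversion}: S--R carrier $n_A$ with gain $a_A:=\gamma_{n_A}^{sr}$ is paired with R--U carrier $o_A$ of ratio-inverse $b_A$, and $n_B$ with $o_B$ of ratio-inverse $b_B$, where $a_A>a_B$ but $b_A>b_B$. Let $Q$ be the source power these two carriers use at the optimum and $u:=P_{n_A}^s a_A$, $v:=P_{n_B}^s a_B$ the corresponding effective SNRs, so their joint contribution is $R(u,b_A)+R(v,b_B)$. Re-route so that $n_A$ uses $o_B$ and $n_B$ uses $o_A$, keep everything else and the sub-budget $Q$ fixed, and re-optimize $P_{n_A}^s,P_{n_B}^s$ over that sub-budget. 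The key algebraic fact is that the $(\sigma^2+y)$ factor cancels in the difference: $R(y,b_B)-R(y,b_A)=\tfrac12\log_2\frac{\sigma^2+y b_A}{\sigma^2+y b_B}=:h(y)$, which is nonnegative (as $b_A>b_B$) and strictly increasing. If $u\ge v$, feeding the \emph{same} powers into the swapped routing yields a contribution $R(u,b_B)+R(v,b_A)=[R(u,b_A)+R(v,b_B)]+(h(u)-h(v))$, which strictly exceeds the original. If instead $u<v$, then the ``SNR-swapped'' allocation $P_{n_A}^s=v/a_A$, $P_{n_B}^s=u/a_B$ costs $v/a_A+u/a_B<u/a_A+v/a_B=Q$ in source power (a one-line rearrangement using $v-u>0$ and $1/a_A<1/a_B$), so it is feasible with slack and attains $R(v,b_B)+R(u,b_A)$, equal to the original contribution; the leftover power, added to either carrier, makes it strict. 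In both cases the total secure rate strictly increases, contradicting optimality, so the optimal pairing has no inversion, i.e. sorted $\gamma_n^{sr}$ is matched with sorted $\gamma_o^{rm}/\gamma_o^{re}$.

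The step I expect to be the main obstacle is precisely the pairing--power coupling just handled: for a fixed pairing the source powers are pinned down by the KKT relation \eqref{power_allocation_df_analytical_case_b}, and a naive ``swap at fixed power'' argument stalls because one does not know a priori whether the allocated effective SNRs $u,v$ inherit the order of $a_A,a_B$; the case split ($u\ge v$ versus $u<v$) combined with re-optimizing the two powers over the common sub-budget is what makes the argument robust. I would also note that ``high $P_R$'' is used exactly here: after the swap the re-optimized relay powers $P^r=P^s\gamma^{sr}/\gamma^{rm}$ must stay within $C_{4,2}$, which is automatic in Case (b). As an alternative, mirroring the style of the proof of Proposition~\ref{prop_df_pr_budget}, one may present the same computation on a two-subcarrier instance and read off the sign of the rate difference directly. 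Either way the conclusion is that matching a subcarrier with larger $\gamma_n^{sr}$ to one with larger $\gamma_o^{rm}/\gamma_o^{re}$ is necessary for (non-suboptimal) secure-rate performance, establishing the proposition.
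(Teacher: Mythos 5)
Your proposal is correct, but it proves the proposition by a genuinely different route than the paper. The paper's proof works through the KKT stationarity condition \eqref{power_allocation_df_analytical_case_b}, takes the high-SNR limit to extract an asymptotic effective channel gain $\Gamma_{(n,o)}^{d}=\frac{1}{\gamma_n^{sr}}\bigl(\gamma_o^{rm}/\gamma_o^{re}-1\bigr)$, and then argues on a two-subcarrier instance that, for the subcarrier with larger $\gamma^{sr}$ to receive more power under secure water-filling, it must be paired with the larger ratio $\gamma_o^{rm}/\gamma_o^{re}$. You instead keep the exact Case-(b) rate $R(y,b)=\tfrac12\log_2\frac{\sigma^2+y}{\sigma^2+yb}$ and run a rearrangement-type exchange argument: any inversion in the pairing can be undone, either at fixed powers (using that $h(y)=R(y,b_B)-R(y,b_A)$ is increasing) or by swapping the effective SNRs at strictly lower source cost and spending the slack. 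Your argument is exact rather than asymptotic and handles the pairing--power coupling explicitly, which is a real strengthening; the paper's argument, by contrast, produces the effective-gain quantity $\Gamma_{(n,o)}^{d}$ that the subsequent channel-gain-tailoring result (Theorem \ref{theorem_df_scp_channel_tailoring}) and the AF analogue are built on, so it is the more useful object for the rest of the paper even though it is only a high-SNR heuristic. Two minor caveats on your side: strictness of the improvement fails in the degenerate corner $u=v$ of your first case (there the swap is only rate-preserving, so the conclusion should be stated as ``no inversion is ever strictly better,'' which suffices for the proposition); and the post-swap relay power requirement can increase, so feasibility of the perturbation genuinely consumes the ``high $P_R$'' hypothesis --- you note this, but it deserves a one-line bound rather than the word ``automatic.''
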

	\begin{proof}
		To maximize sum secure rate, power should be allocated in such a way that a subcarrier with higher effective channel gain is allocated higher power. Note that finding effective channel gain $\Gamma_{(n,o)}^{d}$ over a subcarrier pair $(n,o)$ in DF relay system is non-trivial in the general case (cf.  \eqref{power_allocation_df_analytical_case_b}). In the high SNR region, \eqref{power_allocation_df_analytical_case_b} gets simplified as
		\begin{eqnarray}
		\frac{\sigma^2(\gamma_{n}^{sr}-\gamma_{n}^{sr'})} {2(P_{n}^s)^2\gamma_{n}^{sr}\gamma_{n}^{sr'}} = \lambda  \text{ } \forall n.
		\end{eqnarray}
		Thus, the effective channel gain under high SNR scenario reduces to
		\begin{eqnarray}
		\Gamma_{(n,o)}^{d} = \frac{\gamma_{n}^{sr}-\gamma_{n}^{sr'}}{\gamma_{n}^{sr}\gamma_{n}^{sr'}} = \frac{\gamma_{o}^{rm}-\gamma_{o}^{re}}{\gamma_{n}^{sr}\gamma_{o}^{re}} = \frac{1}{\gamma_{n}^{sr}} \left( \frac{\gamma_{o}^{rm}}{\gamma_{o}^{re}}-1 \right). 
		\end{eqnarray}
		
		$\Gamma_{(n,o)}^{d}$ includes channel gains of both the links i.e., $\gamma_n^{sr}$ on $\mathcal{S}-\mathcal{R}$ link, and $\gamma_o^{rm}$ and $\gamma_o^{re}$ on $\mathcal{R}-\mathcal{U}$ link. SCP should be efficiently used to match gains of $\mathcal{S}-\mathcal{R}$ and $\mathcal{R}-\mathcal{U}$ links to find optimal effective channel gains. 
		
		To find optimal SCP in this scenario, let us discuss a simple case of two subcarriers such that $\gamma_{1}^{sr}>\gamma_{2}^{sr}$. To achieve more rate on subcarrier-1, we should have $\Gamma_{1}^{d}>\Gamma_{2}^{d}$ such that $P_{1}^s>P_{2}^s$. With reference to channel gains there exists only  two possibilities either $\frac{\gamma_{1}^{rm}}{\gamma_{1}^{re}} \leq \frac{\gamma_{2}^{rm}}{\gamma_{2}^{re}}$ or $\frac{\gamma_{1}^{rm}}{\gamma_{1}^{re}} > \frac{\gamma_{2}^{rm}}{\gamma_{2}^{re}}$. In the first we have $\frac{\gamma_{1}^{rm}}{\gamma_{1}^{re}}-1 \leq \frac{\gamma_{2}^{rm}}{\gamma_{2}^{re}} -1$, hence $\gamma_{2}^{sr}\left( \frac{\gamma_{1}^{rm}}{\gamma_{1}^{re}}-1 \right) \leq \gamma_{1}^{sr}\left(\frac{\gamma_{2}^{rm}}{\gamma_{2}^{re}} -1\right)$ $\implies$ $\Gamma_{1}^{d} \leq \Gamma_{2}^{d}$. Thus, the only feasible case is $\frac{\gamma_{1}^{rm}}{\gamma_{1}^{re}} > \frac{\gamma_{2}^{rm}}{\gamma_{2}^{re}}$. To ensure, subcarrier with higher effective channel gain is allocated higher power, higher $\gamma_{n}^{sr}$ should be paired with higher $\frac{\gamma_{o}^{rm}}{\gamma_{o}^{re}}$.
	\end{proof}
	
	For $\lambda>0$, $\mu=0$, necessary condition for optimal SCP are presented by Proposition \ref{prop_df_ps_budget}. Next, it is proved that this pairing scheme is  the sufficient condition to improve overall system performance. We present this observation through following theorem. This theorem conceptualizes SCP as \emph{channel gain tailoring} compared to conventional subcarrier mapping. 
	\begin{theorem}\label{theorem_df_scp_channel_tailoring}
		For efficient secure communication, ideal SCP should tailor channel gains such that all subcarriers have same effective channel gain. Practically, optimal SCP wishes to reduce the variance of effective channel gains.
	\end{theorem}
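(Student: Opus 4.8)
\textit{Proof proposal.} The plan is to reduce the pairing problem to an optimization over the \emph{effective channel gains} alone. For a pairing permutation $\sigma$, write $\Gamma_n \triangleq \Gamma_{(n,\sigma(n))}^{d}$, i.e.\ the effective channel gain derived in the proof of Proposition~\ref{prop_df_ps_budget}. I would first substitute the (high-SNR) optimal source powers implied by \eqref{power_allocation_df_analytical} --- in the case-(b) regime these satisfy $P_n^{s}\propto\sqrt{\Gamma_n}$, hence $P_n^{s}=\frac{P_S\sqrt{\Gamma_n}}{\sum_k\sqrt{\Gamma_k}}$, with analogous closed forms in the other multiplier cases --- back into the objective of $\mathcal{P}4$ in \eqref{opt_prob_rate_max_simplified_obj2}. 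This turns $\widehat{R_s}$ into a \emph{symmetric} function $\widehat{R_s}(\Gamma_1,\dots,\Gamma_N)$ of the effective gains, and the residual degree of freedom (the pairing $\sigma$) acts only through which gain vector $(\Gamma_1,\dots,\Gamma_N)$ is realized.

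The key structural observation is that the \emph{product} $\prod_{n=1}^{N}\Gamma_n$ is invariant over all pairings. Since $\Gamma_{(n,o)}^{d}=\frac{1}{\gamma_n^{sr}}\bigl(\frac{\gamma_o^{rm}}{\gamma_o^{re}}-1\bigr)$ factors into an $\mathcal{S}$-$\mathcal{R}$ term depending only on $n$ and an $\mathcal{R}$-$\mathcal{U}$ term depending only on $o$, any bijection $\sigma$ gives $\prod_{n}\Gamma_{(n,\sigma(n))}^{d}=\bigl(\prod_n\frac{1}{\gamma_n^{sr}}\bigr)\bigl(\prod_o(\frac{\gamma_o^{rm}}{\gamma_o^{re}}-1)\bigr)$, which does not depend on $\sigma$. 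Thus every admissible pairing yields an effective-gain vector with the same geometric mean; SCP merely redistributes a fixed ``budget'' of log-gain among the $N$ subcarriers, and the optimization of $\widehat{R_s}$ over $\sigma$ is the problem of choosing the best such redistribution.

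With this in hand I would establish that $\widehat{R_s}(\Gamma_1,\dots,\Gamma_N)$ is Schur-concave on the surface $\prod_n\Gamma_n=\mathrm{const}$, i.e.\ spreading the $\Gamma_n$ apart cannot increase the achievable sum rate. Concretely, the saturating per-subcarrier secure rate $\frac12\log_2\!\bigl(\frac{\sigma^2+P_n^s\gamma_n^{sr}}{\sigma^2+P_n^s\gamma_n^{sr'}}\bigr)$ is concave in its own power, so that (a) when all $\Gamma_n$ are equal the $N$ rate curves coincide and, by Jensen's inequality, the optimal allocation is the equal split, giving $N$ times a single concave value; and (b) by AM--GM the arithmetic mean of the $\Gamma_n$ is minimized, at fixed geometric mean, exactly at $\Gamma_1=\dots=\Gamma_N$. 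Combining these pins the global maximum of $\widehat{R_s}$ at the all-equal point, which is precisely the ``ideal SCP tailors the gains to a common value'' statement. For the practical statement, a discrete permutation can essentially never equalize all $N$ gains, so one picks the pairing whose effective-gain vector is closest to the all-equal vector in the majorization order; since variance (at fixed mean, hence a fortiori at fixed geometric mean) is the natural scalar proxy for dispersion in that order, optimal SCP reduces to minimizing the variance of $\{\Gamma_n\}$, and this is consistent with the ``pair large $\gamma_n^{sr}$ with large $\gamma_o^{rm}/\gamma_o^{re}$'' rule of Proposition~\ref{prop_df_ps_budget}.

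I expect the Schur-concavity step to be the main obstacle: for a general pairing the per-subcarrier rate functions are \emph{not} identical concave functions, and the optimal power split is itself $\Gamma$-dependent, so the clean Jensen/AM--GM argument above really needs the high-SNR decoupling $P_n^s\propto\sqrt{\Gamma_n}$ (or, alternatively, a majorization-preserving argument applied directly to the coupled program). This is exactly why the statement carries the ``practically'' qualifier --- variance is only a tractable surrogate for the exact majorization-optimal pairing, and the proof should be explicit that the variance-minimizing pairing is optimal up to this high-SNR approximation (the same approximation already used to define $\Gamma_{(n,o)}^{d}$).
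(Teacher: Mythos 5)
Your framework---reduce SCP to the choice of an effective-gain vector, exhibit an invariant across pairings, and argue that equalizing the gains maximizes the rate---is sensible, and the observation that $\prod_n \Gamma_{(n,\sigma(n))}^{d}$ is the same for every pairing $\sigma$ (because $\Gamma_{(n,o)}^{d}$ factors into an $n$-only term and an $o$-only term) is a genuinely nice structural fact that the paper does not use. But there is a real gap exactly where you flag one, and the two ingredients you offer in place of the missing Schur-concavity do not close it. Step (a) only \emph{evaluates} the sum rate at the all-equal point; it compares it with nothing else on the constant-product surface. Step (b), the AM--GM fact that the arithmetic mean of the $\Gamma_n$ is minimized at the all-equal point for fixed geometric mean, points in an unhelpful (arguably the wrong) direction: a smaller arithmetic mean of channel gains is not prima facie favorable for rate, and nothing in your argument converts ``minimal AM'' into ``maximal sum secure rate.'' What is actually needed is that the power-optimized sum rate is Schur-concave in $(\log\Gamma_1,\dots,\log\Gamma_N)$; since the constant vector is majorized by every vector with the same coordinate sum, equality of the $\Gamma_n$ would then be optimal. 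That Schur-concavity \emph{is} the content of the theorem and remains unproven in your proposal. A second, subtler problem: in the strict high-SNR limit in which $\Gamma_{(n,o)}^{d}$ is derived, each per-subcarrier secure rate saturates at $\frac12\log_2(\gamma_o^{rm}/\gamma_o^{re})$, whose sum over a bijection is pairing-independent, so the benefit of SCP lives entirely in the finite-power corrections that the decoupling $P_n^s\propto\sqrt{\Gamma_n}$ discards; your argument must therefore be carried out at finite power, not after taking the limit that defines $\Gamma$.

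For contrast, the paper's proof in \ref{app2} is a direct two-subcarrier perturbation: it fixes the \emph{sum} $\gamma_1+\gamma_2$, spreads the gains to $\gamma_1+\delta$ and $\gamma_2-\delta$, recomputes the water-filling powers explicitly, and shows the sum rate strictly decreases whenever $P_S$ exceeds the threshold in \eqref{ps_lower_bound_scp} (essentially per-subcarrier SNR above one). This is itself an informal mean-preserving-transfer argument applied to the non-secure water-filling rate rather than to the secure objective, so it is not airtight either---but it does exhibit the monotone ``spreading reduces rate'' mechanism that your write-up only postulates. If you want to complete your route, the concrete missing lemma is: the optimal value of $\mathcal{P}4$ is Schur-concave in the log effective gains, which you could attack via the product-preserving two-point transfer $\Gamma_1\mapsto t\Gamma_1$, $\Gamma_2\mapsto\Gamma_2/t$ with $t>1$, mirroring (in multiplicative form) the additive transfer the paper actually carries out.
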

	\begin{proof}
		See \ref{app2}.
	\end{proof}
	The optimal SCP strategy is to tailor the channel gains such that effective channel gains are equal, which leads to equal power allocation, and hence equal rate over all the subcarriers. This may not be feasible as the channels gains are discrete quantities. Hence, a feasible solution is to minimize the variance between the tailored channel gains.

	\begin{remark}
		The optimal SCP in this case, where $P_S$ budget is fully utilized, is to sort $\gamma_{n}^{sr}$ on $\mathcal{S}-\mathcal{R}$ link and $\frac{\gamma_{o}^{rm}}{\gamma_{o}^{re}}$ on $\mathcal{R}-\mathcal{U}$ link, and match one by one. 
	\end{remark}
	\begin{corollary}
		When $P_R$ budget is bottleneck (case (a)), the channel gain tailoring reduces to the SCP strategy of matching $\gamma_{n}^{sr}$ on $\mathcal{S}-\mathcal{R}$ link with $P_{o}^r\gamma_{o}^{rm}$ on $\mathcal{R}-\mathcal{U}$ (cf. remark \ref{rm_scp_df_case1}).
	\end{corollary}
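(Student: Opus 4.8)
The plan is to specialize the general channel-gain-tailoring principle of Theorem~\ref{theorem_df_scp_channel_tailoring} to the bottlenecked-relay regime of Proposition~\ref{prop_df_pr_budget}, and to show that it collapses to the concrete matching rule of Remark~\ref{rm_scp_df_case1}. The first step is to recall that in Case~(a) the relay power allocation $P_o^r$ is the positive root of the quadratic obtained from \eqref{power_allocation_df_analytical_case_a}, with $\mu$ fixed by $\sum_o P_o^r = P_R$; crucially, this allocation, and hence each per-subcarrier rate $\frac{1}{2}\log_2\!\big((\sigma^2+P_o^r\gamma_o^{rm})/(\sigma^2+P_o^r\gamma_o^{re})\big)$, depends only on the $\mathcal{R}-\mathcal{U}$ gains and $P_R$, and is therefore \emph{invariant} under SCP. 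Consequently the attainable sum secure rate is already fixed, and the sole remaining degree of freedom—the permutation pairing source subcarriers $n$ with relay subcarriers $o$—can only act on the \emph{resource} side. This reframes the tailoring objective of Theorem~\ref{theorem_df_scp_channel_tailoring} from spectral to energy efficiency, exactly as asserted in Proposition~\ref{prop_df_pr_budget}.

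Next I would make the energy-efficiency objective explicit. Invoking the DF equal-SNR relation $P_n^s\gamma_n^{sr}=P_o^r\gamma_o^{rm}$ that underlies $\mathcal{P}4$, the source power spent on pair $(n,o)$ is $P_n^s = P_o^r\gamma_o^{rm}/\gamma_n^{sr}$, so the total source power for a pairing permutation $\pi$ (mapping relay subcarrier $o$ to source subcarrier $\pi(o)$) is $\sum_o (P_o^r\gamma_o^{rm})/\gamma_{\pi(o)}^{sr}$. Since the numerators $P_o^r\gamma_o^{rm}$ are fixed by the previous step, achieving the rate bound with least power is the combinatorial problem of minimizing this sum over all $\pi$. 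The two-subcarrier interchange already carried out in the proof of Proposition~\ref{prop_df_pr_budget} (cf.\ \eqref{power_req_df_case_a_2u2c}, where $P_{S_\Delta}<0$) supplies the base case; I would promote it to arbitrary $N$ by the rearrangement inequality applied to $\{P_o^r\gamma_o^{rm}\}$ and $\{1/\gamma_n^{sr}\}$: the sum $\sum_o (P_o^r\gamma_o^{rm})\,(1/\gamma_{\pi(o)}^{sr})$ is minimized when these two sequences are \emph{oppositely} ordered, equivalently when $\{P_o^r\gamma_o^{rm}\}$ and $\{\gamma_n^{sr}\}$ are similarly ordered—the largest $\gamma_n^{sr}$ matched with the largest $P_o^r\gamma_o^{rm}$, and so on down the list—identically the one-by-one sorted matching of Remark~\ref{rm_scp_df_case1}.

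The last step is to close the loop with Theorem~\ref{theorem_df_scp_channel_tailoring} so that the word ``reduces'' is justified rather than merely asserted. Here I would observe that the per-pair source powers $P_o^r\gamma_o^{rm}/\gamma_n^{sr}$ are exactly the quantities whose spread channel-gain tailoring seeks to flatten in this regime: the same sorted matching that minimizes their sum also compresses their variance, since pairing large numerators with large denominators prevents any single pair from demanding a disproportionately large source power. Thus the ``equalize the effective gains'' prescription of Theorem~\ref{theorem_df_scp_channel_tailoring} and the rearrangement-optimal pairing coincide. I expect the main obstacle to be precisely this reconciliation: verifying that in Case~(a) the variance-minimizing pairing and the sum-minimizing pairing are the same object, since the effective gain $\Gamma_{(n,o)}^{d}$ used in the proof of Theorem~\ref{theorem_df_scp_channel_tailoring} was derived for the source-bottlenecked Case~(b) and must be re-identified for the relay-bottlenecked case before the variance interpretation can be transported over.
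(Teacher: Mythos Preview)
The paper does not actually supply a proof for this corollary: it is stated immediately after Theorem~\ref{theorem_df_scp_channel_tailoring} and its remark, and the authors evidently regard it as self-evident from the preceding discussion, since Proposition~\ref{prop_df_pr_budget} (with its two-subcarrier computation \eqref{power_req_df_case_a_2u2c}) and Remark~\ref{rm_scp_df_case1} have already established the sorted matching of $\gamma_n^{sr}$ with $P_o^r\gamma_o^{rm}$ as the energy-efficient SCP in Case~(a). The corollary merely relabels that conclusion in the language of ``channel gain tailoring'' introduced by Theorem~\ref{theorem_df_scp_channel_tailoring}.

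Your proposal is therefore considerably more thorough than anything the paper provides, and the core of it is correct. In particular, your use of the rearrangement inequality to pass from the two-subcarrier interchange in \eqref{power_req_df_case_a_2u2c} to arbitrary $N$ is exactly the right tool and fills a genuine gap the paper leaves implicit. Your observation that the relay allocation $P_o^r$ from \eqref{power_allocation_df_analytical_case_a} is SCP-invariant, so that SCP can only act on the source-power side, is also the right structural insight. The one place you flag as an obstacle---reconciling the variance-minimization reading of Theorem~\ref{theorem_df_scp_channel_tailoring} with the sum-minimization objective here---is a real tension, and the paper does not resolve it either: the effective gain $\Gamma_{(n,o)}^d$ in Theorem~\ref{theorem_df_scp_channel_tailoring} is derived under the Case~(b) assumption $\mu=0$, so strictly speaking the ``tailoring'' framework is being transported to Case~(a) by analogy rather than derivation. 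Your honest acknowledgment of this is appropriate; the paper simply asserts the reduction without addressing it.
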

	\subsubsection{Case (c): $\lambda>0$, $\mu>0$}
	$\lambda>0$ and $\mu>0$ is a not a common case, channel gains should be such that $P_n^s\gamma_{n}^{sr} = P_o^r\gamma_{o}^{rm}$ is {\color{black}satisfied} on each subcarrier. Further, both power budgets are tight, i.e., $\sum_{n=1}^N P_{n}^s = P_S$ and $\sum_{o=1}^N P_{o}^r = P_R$. We know that achievable sum secure rate is bounded by $P_R$ (cf. Proposition \ref{prop_df_pr_budget}). Note that optimal SCP matching a higher $\gamma_{n}^{sr}$ with higher $P_o^r\gamma_{o}^{rm}$ is energy-efficient as indicated in the proof of the proposition. The source power budget requirement will be more than the optimal value if optimal SCP is not followed. Then, $P_S$ will {\color{black}become} the effective bottleneck {\color{black}and} the achievable sum secure rate will be lower compared to the upper bound decided by secure water filling on $\mathcal{R}-\mathcal{U}$ link. 
		
	\emph{To summarize, in {\color{black}a} DF relayed system optimal SCP is conditioned on power budgets $P_S$ and $P_R$. Firstly, relay power allocation is done assuming relay power budget $P_R$ is the bottleneck. Subcarriers over $\mathcal{S}-\mathcal{R}$ link sorted according to $\gamma_{n}^{sr}$ are matched with subcarriers over $\mathcal{R}-\mathcal{U}$ link sorted according to $P_o^r\gamma_{o}^{rm}$.  $P_n^s$ is obtained using \cite[Theorem 1]{RSaini_CL_2016}. If $\sum_{n=1}^N P_{n}^s \leq P_S$, then this SCP and power allocation are optimal. Otherwise (if $\sum_{n=1}^N P_{n}^s > P_S$), the actual bottleneck is source power budget $P_S$ and not $P_R$. Now,  subcarriers over $\mathcal{S}-\mathcal{R}$ link sorted according to $\gamma_{n}^{sr}$ are paired with subcarriers over $\mathcal{R}-\mathcal{U}$ link sorted according to $\gamma_{o}^{rm}/\gamma_{o}^{re}$. Source power allocation is achieved by using secure water filling on the $\mathcal{S}-\mathcal{R}$ link.}

	\subsection{Optimal SCP for AF Relay System}\label{scp_af}
	In {\color{black}an} AF relay, secure rate is concave increasing function of source power, and has pseudoconcave nature with relay power (cf. Proposition \ref{AF_secure_rate_nature}). Since optimal relay power $P_o^{r^*}$ is dependent on source power $P_n^s$ (cf. \eqref{optimal_prn_for_psn}), optimal power allocation has to be solved jointly at the source. Power allocation in {\color{black}an} AF relay system is not decomposable as in {\color{black}a} DF relay case, and power allocation cannot be obtained analytically in terms of independent equation of $P_n^s$ and $P_o^r$, due to inter-dependent source and relay power equations (cf. \eqref{lagrange_derivative_with_psn} and \eqref{lagrange_derivative_with_prn}). Thus, finding equivalent channel gain in {\color{black}an} AF relay is more difficult compared to DF case. 
	
	In high SNR regime, \eqref{lagrange_derivative_with_psn} is simplified, and an asymptotic effective channel gain can be estimated. In high SNR scenario, relay uses optimal power over each subcarrier, i.e., $P_o^r = P_o^{r^*}$, such that $\sigma^2(\sigma^2+P_n^s\gamma_n^{sr}) = (P_o^r)^2\gamma_o^{rm}\gamma_o^{re}$. This leads to $\mu=0$. From \eqref{lagrange_derivative_with_psn} we get:  
	\begin{align}\label{effective_channel_gian_AF}
	&\lambda \approx \frac{1}{2} \frac{P_o^r \gamma_n^{sr} (\gamma_o^{rm}-\gamma_o^{re})}{P_{o}^r\gamma_o^{rm} \left( 1 + \frac{P_{o}^r\gamma_o^{re}}{\sigma^2} \right) P_{o}^r\gamma_o^{re} \left(1 + \frac{P_{o}^r\gamma_o^{rm}}{\sigma^2} \right) } \approx \frac{1}{2} \frac{\sigma^4 P_o^r \gamma_n^{sr} (\gamma_o^{rm}-\gamma_o^{re})}{\left( (P_{o}^r)^2\gamma_o^{rm} \gamma_o^{re} \right)^2} \approx \frac{1}{2} \frac{\sigma^2(\gamma_o^{rm}-\gamma_o^{re})} {\sqrt{\sigma^2 (P_{n}^s)^3 \gamma_n^{sr}\gamma_o^{rm} \gamma_o^{re}}}.	
	\end{align} 
	
	From \eqref{effective_channel_gian_AF} we note {\color{black}that}, the effective channel gain over subcarrier pair $(n,o)$ in {\color{black}an} AF case can be specified as $\Gamma_{(n,o)}^{a} = \frac{\gamma_{o}^{rm}-\gamma_{o}^{re}}{\sqrt{\gamma_n^{sr}\gamma_o^{rm} \gamma_o^{re}}}$. A subcarrier with higher effective channel gain is assigned higher source power, and hence it achieves a higher secure rate. As shown in the proof of Proposition \ref{prop_df_ps_budget}, if $\gamma_1^{sr}>\gamma_2^{sr}$ then, to have $P_1^s>P_2^s$ we need to have $\frac{\gamma_{1}^{rm}-\gamma_{1}^{re}}{\sqrt{\gamma_1^{rm} \gamma_1^{re}}}>\frac{\gamma_{2}^{rm}-\gamma_{2}^{re}}{\sqrt{\gamma_2^{rm} \gamma_2^{re}}}$. Thus, a higher $\gamma_n^{sr}$ should be matched with higher $\frac{\gamma_{o}^{rm}-\gamma_{o}^{re}}{\sqrt{\gamma_o^{rm} \gamma_o^{re}}}$ to maximize sum secure rate.
	
	\begin{remark}
		Note that the observation in Theorem \ref{theorem_df_scp_channel_tailoring} is valid for any power allocation strategy that assigns more power over a subcarrier with more channel gain. Since AF power allocation also provides more $P_n^s$ to a subcarrier with higher effective channel gain $\Gamma_{(n,o)}^{a}$, the optimal SCP tries to minimize the variance of the tailored effective channel gains. 
	\end{remark}
	
	\emph{In a nutshell, the asymptotically optimal SCP policy for {\color{black}an} AF relay is to match the sorted $\gamma_n^{sr}$ on $\mathcal{S}-\mathcal{R}$ link with the sorted $\frac{\gamma_{o}^{rm}-\gamma_{o}^{re}}{\sqrt{\gamma_o^{rm} \gamma_o^{re}}}$ ratios on the $\mathcal{R}-\mathcal{U}$ link.} Further, even though the optimal SCP has been investigated under high SNR assumption, its validity at low SNR has been proved through numerical results in section \ref{proposed_optimal_vs_bruteforce}. 
	
	\subsection{Comparison between AF and DF Schemes}\label{comparison_af_df}
	Here we conduct a brief comparison study between the AF and DF relayed systems.
	\subsubsection{Subcarrier Allocation} 
	For both AF and DF relayed system, subcarrier allocation strategy is to  allocate a subcarrier to a user having maximum gain over the $\mathcal{R}-\mathcal{U}$ link.
	\subsubsection{Power Allocation}
	In DF relay, secure rate is increasing in source as well as relay power. Thus, optimum value is achieved at the boundary condition decided by either source or relay power budget. If there is enough $P_S$, and $P_R$ is constrained, the power allocation is solved at the relay through secure water filling on $\mathcal{R}-\mathcal{U}$ link. If instead $P_S$ is bottleneck, power allocation is obtained by  secure water filling on $\mathcal{S}-\mathcal{R}$ link.
	
	In {\color{black}an} AF relayed system, secure rate is concave increasing in source power, and  it is a pseudoconcave function \cite{Baz} of relay power. Source power budget is always fully utilized, while relay power budget may not be fully used. If $\sum_{n=1}^N P_{n}^{r^*} > P_R$, i.e., relay power budget is bottleneck, then secure rate is lower than the achievable rate with sufficient $P_R$.
	
	\subsubsection{Subcarrier Pairing}
	In {\color{black}a} DF relay-assisted system, depending on the source and relay power budgets, SCP can improve either energy efficiency or spectral efficiency of the system. In contrast, in {\color{black}an} AF relay-assisted system SCP is always helpful in improving spectral efficiency of the system.

	\subsubsection{Sum Secure Rate Upper Bound}
	In {\color{black}a} DF relayed system, if $P_R$ is bottleneck, upper bound on sum rate is controlled by secure water filling on $\mathcal{R}-\mathcal{U}$ link. Instead, if $P_S$ is bottleneck then, the bound is decided by secure water filling on $\mathcal{S}-\mathcal{R}$ link.
	In {\color{black}an} AF relayed system, sum rate upper bound is obtained when there is enough relay power to complete optimal allocation i.e., $\sum_{n=1}^N P_{n}^{r^*} \leq P_R$. Otherwise, the sum rate achieved is lesser than the upper bound.   
	
	\subsection{Algorithm Complexity}\label{complexity_analysis}
	{\color{black}Since the secure rate is concave increasing in source and relay powers in a DF relay case, and concave increasing in source power and pseudo-concave in relay power for an AF relay case, the optimal solution is guaranteed, due to inheret nature of the secure rate definitions \cite{Baz}. Thus, the algorithm achieving optimal solution is bound to converge.} 
	In {\color{black}a} DF relay case subcarrier allocation is a search on $M$ channel gains on $\mathcal{R}-\mathcal{U}$ link, power allocation is a one-dimensional (1D) search in either $\lambda$ or $\mu$, and SCP is matching of sorted channel gains. Through decoupling, we have been able to remove the complexity of subcarrier allocation and SCP. Thus, overall complexity of the resource allocation is dominated by complexity of the power allocation which is a 1D search having complexity $\mathcal{O}(N\log N)$ \cite{W_Yu_TCOM_2006}. 
	In {\color{black}an} AF relay case, after replacing $P_n^s$ from \eqref{lagrange_derivative_with_psn} into  \eqref{lagrange_derivative_with_prn}, we get $N$ equations. These along with the complimentary slackness conditions leads to a system of $(N+2)$ equations in $(N+2)$ unknowns $(P_n^r, \lambda, \mu)$. 
	In a special case when, there is enough relay power budget to allow optimal relay power allocation $(P_n^r=P_n^{r^*})$ over each subcarrier, the power allocation is simply a 1D search for optimal $\lambda$, having the complexity as $\mathcal{O}(N\log N)$. 
	
	\begin{figure*}[!t]
		\begin{minipage}{.5\textwidth}
			\centering
			\epsfig{file=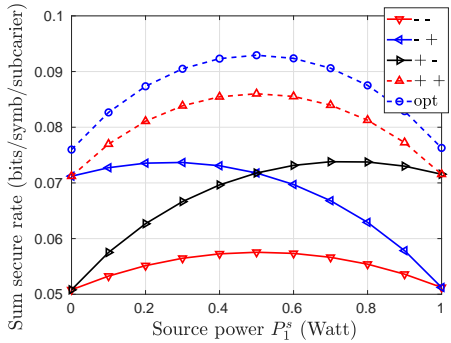,width=2.5in}
			\caption{Maximum sum secure rate, achieved at $P_n^{r^*}$.}
			\label{fig:insight_a_2u2c_af}
		\end{minipage} 
		\begin{minipage}{.5\textwidth}
			\centering
			\epsfig{file=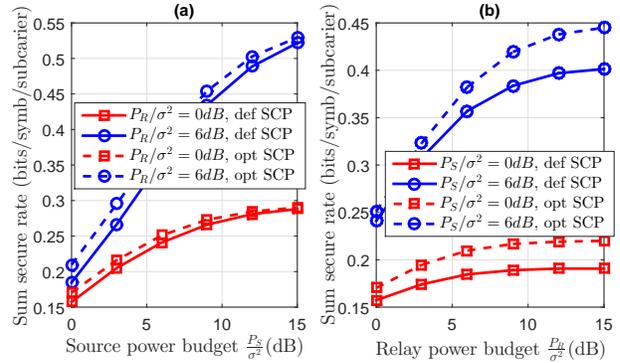,width=3.2in}\vspace{-1mm}
			\caption{Insight on role of SCP in DF relayed system.}
			\label{fig:insiht_b_scp_role_df}
		\end{minipage} 
	\end{figure*} 
		\section{Results and Discussion}\label{sec_results}
The performance of our proposed resource allocation schemes for both AF and DF relayed systems have been presented in this section. By default, downlink of an OFDMA based communication system with $N=64$ subcarriers is assumed. All subcarriers are assumed to bear frequency flat slow fading. The channel parameters remains constant for a frame duration but changes randomly in the next frame. AWGN noise variance is taken as $\sigma^2=0$ dB.  Path loss exponent is assumed to be $3$. $\mathcal{S}$ and $\mathcal{R}$ are, respectively, situated at $(0,0)$ and $(1,0)$. By default $M=8$ untrusted users are considered. Users are randomly placed inside a unit length square which has center at $(2,0)$ {\color{black}as described in Table \ref{sim_param}}. Overall performance is presented in the form of average sum secure rate which is calculated in bits per OFDM symbol per subcarrier (denoted by bits/symb/subcarrier in the figures) averaged over random channel realizations.
	
	First, we present proof of concepts through an exhaustive study over a two-user system in Section \ref{results_insights}. We study DF and AF relayed systems, respectively in Sections \ref{results_df} and \ref{results_af}. 
	Comparison of the proposed schemes with benchmark schemes is discussed in Section \ref{results_comparison}.

\newcolumntype{C}{>{\centering\arraybackslash}p{3em}}
\newcolumntype{L}[1]{>{\raggedright\let\newline\\\arraybackslash}p{#1}}
\begin{table}[!htb]
\caption{System simulation parameters}\label{sim_param}
\centering
{\scriptsize
\begin{tabular}{|L{3cm} |L{1cm}| L{4cm}| L{3cm}| L{1cm}| L{1cm}|}
\hline
System parameter & Symbol & value & System parameter & Symbol & value\\ 
\hline
Number of users & M & 8 & Number of subcarriers & N & 64\\ 
\hline
AWGN variance & $\sigma^2$ & 0 dB & Path loss exponent & $\alpha$ & 3\\ 
\hline
Source location &  &  \{0,0\} & Relay location &  & \{1,0\}\\ 
\hline
Users location &  & Unit square centered at \{2,0\} & & & \\ 
\hline
\end{tabular}
}
\end{table}

	\subsection{Insights in a two-User Secure OFDMA System}\label{results_insights}
	Through this section we provide insights on nontrivial concepts proposed in this work by considering a small, two-user system. These insights help the reader to appreciate usefulness and optimality of the proposed solutions. First we consider relevance of the optimal relay power allocation $P_n^{r^*}$ for AF relayed system, and then the dual role of SCP in a DF system is considered. Next, we show that the proposed SCP based on effective channel gains is as good as finding an exact SCP based on brute force algorithm having exponential complexity. Finally, we present numerical results to corroborate our claim that optimal SCP, denoted as `opt SCP', minimizes the variance of {\color{black}the} effective channel gains.
	
	\subsubsection{Role of $P_n^{r^*}$ in AF Relay Case}
	In order to emphasize the utility of allocating optimal relay power allocation $P_n^{r^*}$, we consider a two-user two-subcarrier system where $P_{1}^s$ is varied from $0$ to $P_S$, and $P_{2}^s = P_S-P_{1}^s$ with $P_S=1$. Corresponding  $P_1^{r^*}$ and $P_2^{r^*}$ are obtained using \eqref{optimal_prn_for_psn}. For a small positive $\delta$, we compare the following relay power allocations in Fig. \ref{fig:insight_a_2u2c_af}:
	\begin{itemize}
		\item Scheme `$--$': with $P_1^{r} = P_1^{r^*} - \delta$, $P_2^{r} = P_2^{r^*} - \delta$
		\item Scheme `$-+$': with $P_1^{r} = P_1^{r^*} - \delta$, $P_2^{r} = P_2^{r^*} + \delta$
		\item Scheme `$+-$': with $P_1^{r} = P_1^{r^*} + \delta$, $P_2^{r} = P_2^{r^*} - \delta$
		\item Scheme `$++$': with $P_1^{r} = P_1^{r^*} + \delta$, $P_2^{r} = P_2^{r^*} + \delta$
		\item Scheme `$opt$': with $P_1^{r} = P_1^{r^*}$, $P_2^{r} = P_2^{r^*}$
	\end{itemize}
	We note that, `$--$' performs worst, as the relay power allocated over both the subcarriers is less than the optimal; `$-+$' and `$+-$' are complimentary schemes having crossover; `$++$' is better than all above, as it has more power on both the subcarriers; while `$opt$' is the best as it allocates optimal relay power. \emph{Thus, allocating optimal power (neither higher nor lower than $P_n^{r^*}$) over each subcarrier is the best strategy for sum rate maximization. }
	
	\subsubsection{Role of SCP in DF Relay Case}
In this subsection, we present the comparison of proposed `opt SCP' with default SCP, denoted by `def SCP', which pairs $n$th subcarrier over $\mathcal{S}-\mathcal{R}$ link with $n$th subcarrier on $\mathcal{R}-\mathcal{U}$ link. We wish to emphasize the roles of SCP in a DF relayed system through a simple two-user two-subcarrier system. Observing Fig. \ref{fig:insiht_b_scp_role_df}, it can be noted that if $P_S$ is high, and $P_R$ is bottleneck, `opt SCP' supports in energy efficiency. Thus, both schemes have same rate. When $P_S$ is bottleneck, and $P_R$ is high, `opt SCP' plays main role in maximizing sum rate. 

	Fig. \ref{fig:insiht_b_scp_role_df}(a) presents sum secure rate performance with $P_S$. For low $P_S$ `opt SCP' plays important role, and improves the sum rate. The gain in sum rate is small at lower $P_R$, whereas it is large at higher $P_R$. For  higher $P_S$, it is $P_R$ which is the bottleneck. It can be noted that, at lower $P_R$ `opt SCP' does not perform well as sum rate curves for `def SCP' and `opt SCP' converge very early compared to higher $P_R$ scenario. Fig. \ref{fig:insiht_b_scp_role_df}(b) captures sum rate performance with relay power $P_R$. For lower $P_S$, `opt SCP' plays a key role in improving sum rate.  \emph{Thus, `opt SCP' depicts important role when $P_S$ is comparatively smaller than $P_R$.}
	
	\begin{figure}[!t]
		\centering
		\epsfig{file=./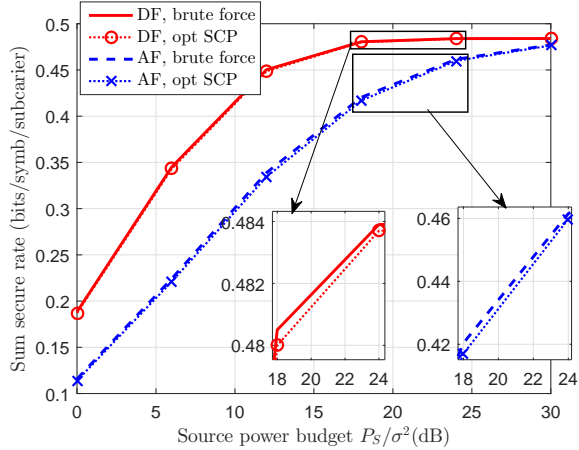,width=3.2in}\vspace{-1mm}
		\caption{Validation of the proposed `opt SCP'.}
		\label{fig:insight_c_df_af_2u3c}\vspace{-2mm}
	\end{figure}
	\subsubsection{Exact versus Asymptotically Optimal SCP}\label{proposed_optimal_vs_bruteforce}
	In Fig. \ref{fig:insight_c_df_af_2u3c}, we compare the best SCP obtained after exhaustive search on all possible combinations with our proposed SCP {\color{black}which is} based on effective channel gains {\color{black}derived in high SNR region}. Instead of a two-user two-subcarrier system which has just two SCP combinations, we consider a two-user three-subcarrier system having $N!=6$ SCP combinations.
	
	The `brute force' scheme chooses the best pairing combination that results in maximum secure rate (among 6 possible combinations) after optimal power allocation. We compare its performance with our proposed `opt SCP', at $P_R/\sigma^2=6$ dB. Note that, secure rate with `opt SCP' is very close to that of `brute force' scheme, and the gap reduces with increasing source power $P_S$. 
	\emph{Thus, `opt SCP', which performs as good as `brute force' (exhaustive  search), is a computationally efficient solution {\color{black}having} reduced complexity by an order of $N!$.}

	\subsubsection{Subcarrier Pairing as Channel Gain Tailoring}
	In this subsection, we validate our claim that optimal SCP tailors channel gains  so as to minimize variance of the effective channel gains. A two-user three-subcarrier system {\color{black}with a DF relay} is considered. There are in total $N!=3!=6$ possible pairing combinations. In Fig. \ref{fig:insight_d_2u3c}(a), effective channel gains of all six subcarriers have been presented. Variance of effective gains and sum rate (at $P_S/\sigma^2 = P_R/\sigma^2 = 15$ dB) are presented in text boxes above each combination. Note that the SCP combination showing minimum gain variance has maximum secure rate. Similar behavior is observed for an AF relay-assisted two-user three-subcarrier systems' performance plotted in Fig. \ref{fig:insight_d_2u3c}(b).  \emph{In summary, the combination having least effective channel gain variance has  the highest secure rate. Thus, `opt SCP' attempts to map subcarriers so as to  achieve effective channel gains having minimum variance.}
		
	\begin{figure}[!t] 
		\centering
		\subfigure[]
		{\includegraphics[width=2.8in]{./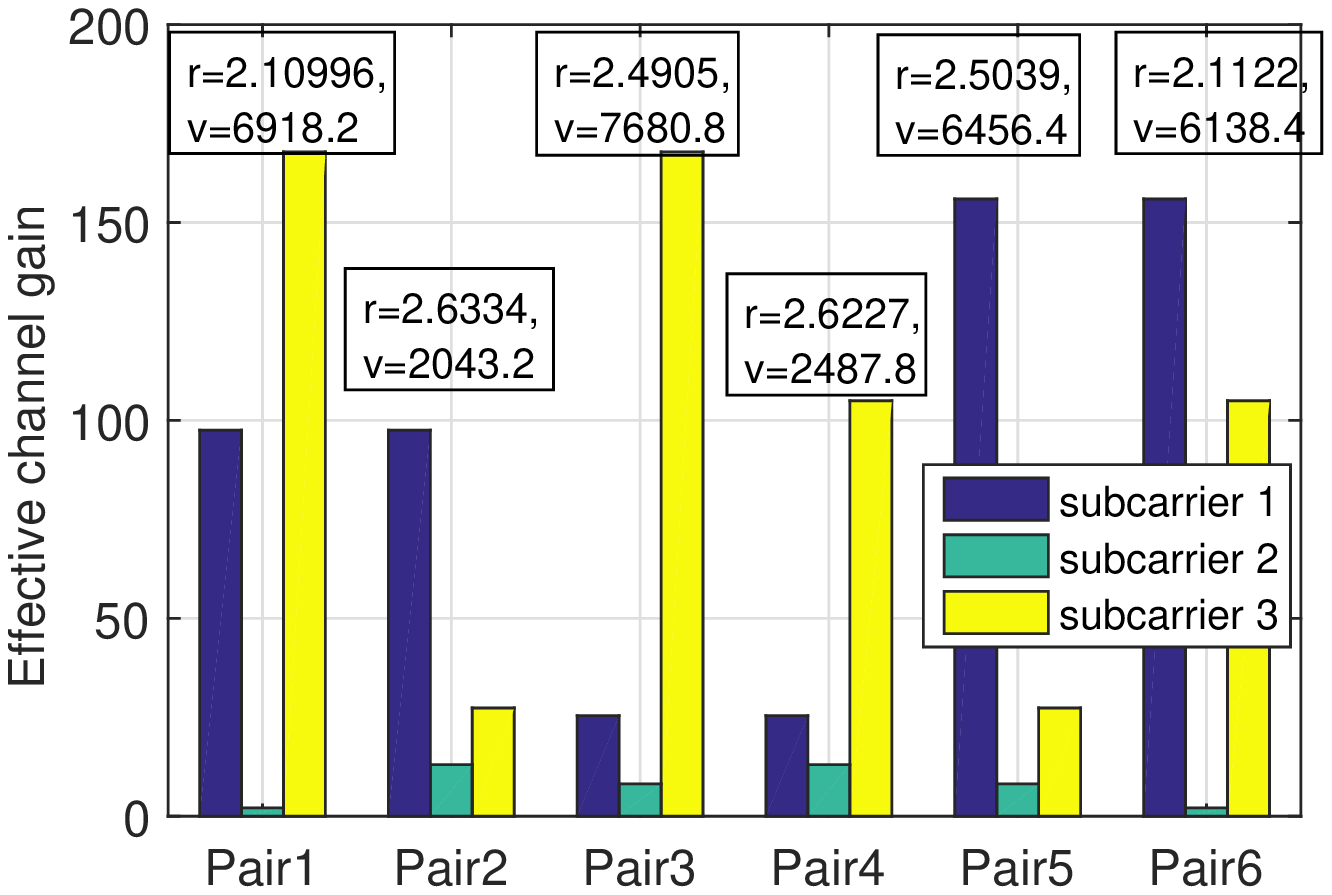} } 
		\subfigure[]
		{\includegraphics[width=2.8in]{./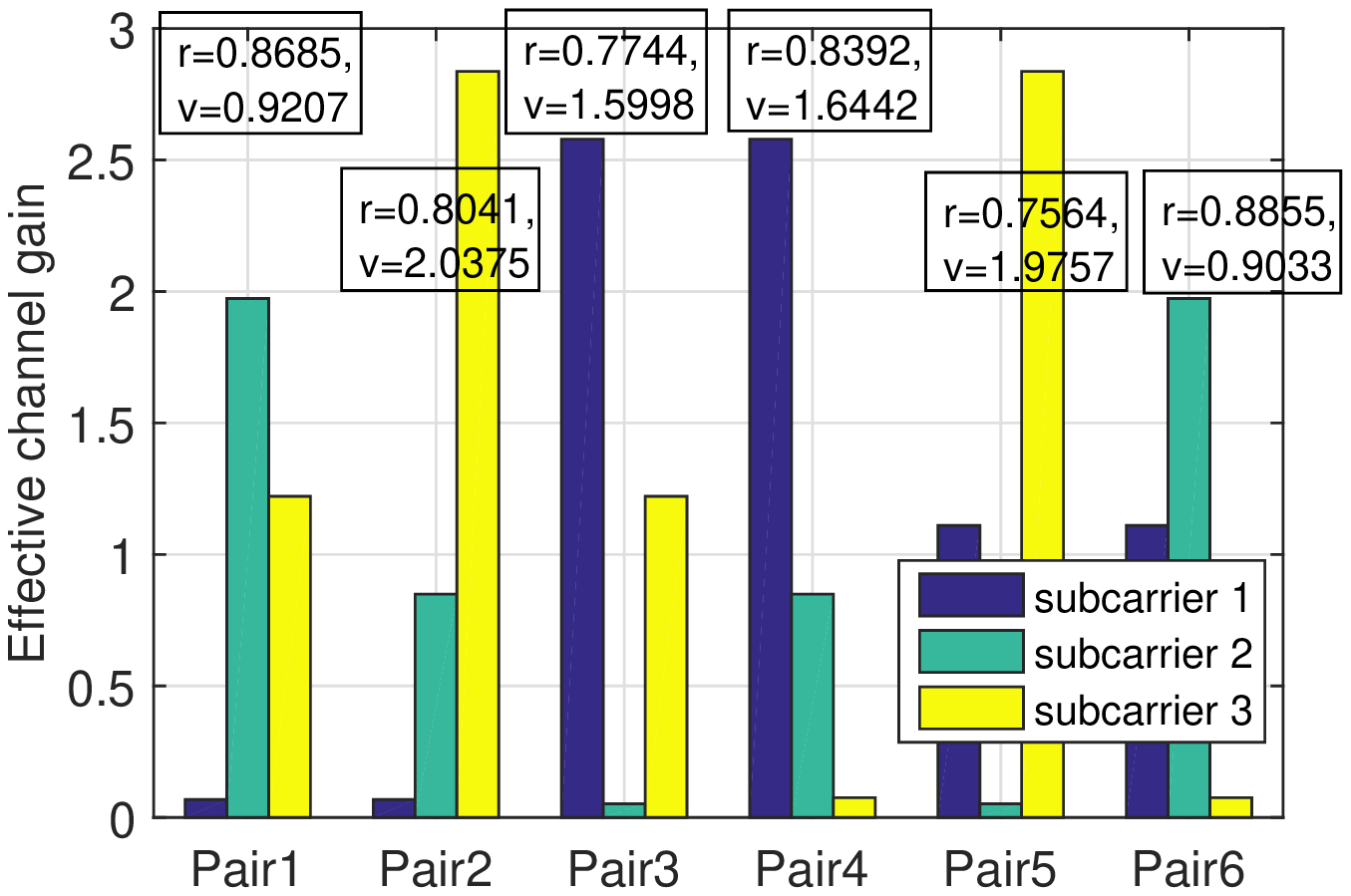} }  
		\caption{Effective channel gains for different SCP  combinations: (a) DF relay system; (b) AF system. `r'
			and `v' respectively denote the sum rate and the variance of effective channel gains for each SCP combination.}
		\label{fig:insight_d_2u3c}
	\end{figure}  
	\subsection{Performance of a DF Relayed System}\label{results_df}
	The performance of a DF relay-assisted secure communications is limited by either $P_S$ or $P_R$. If $P_R$ is the bottleneck then secure rate as provided by `def SCP' cannot be  improved by `opt SCP', whereas if $P_S$ is bottleneck then `opt SCP' plays significant role. In order to highlight the efficacy of `opt SCP' we present sum secure rate with $P_S$ and $P_R$ separately. 
	
	\begin{figure*}[!t]
		\begin{minipage}{.48\textwidth}
			\centering
			\epsfig{file=./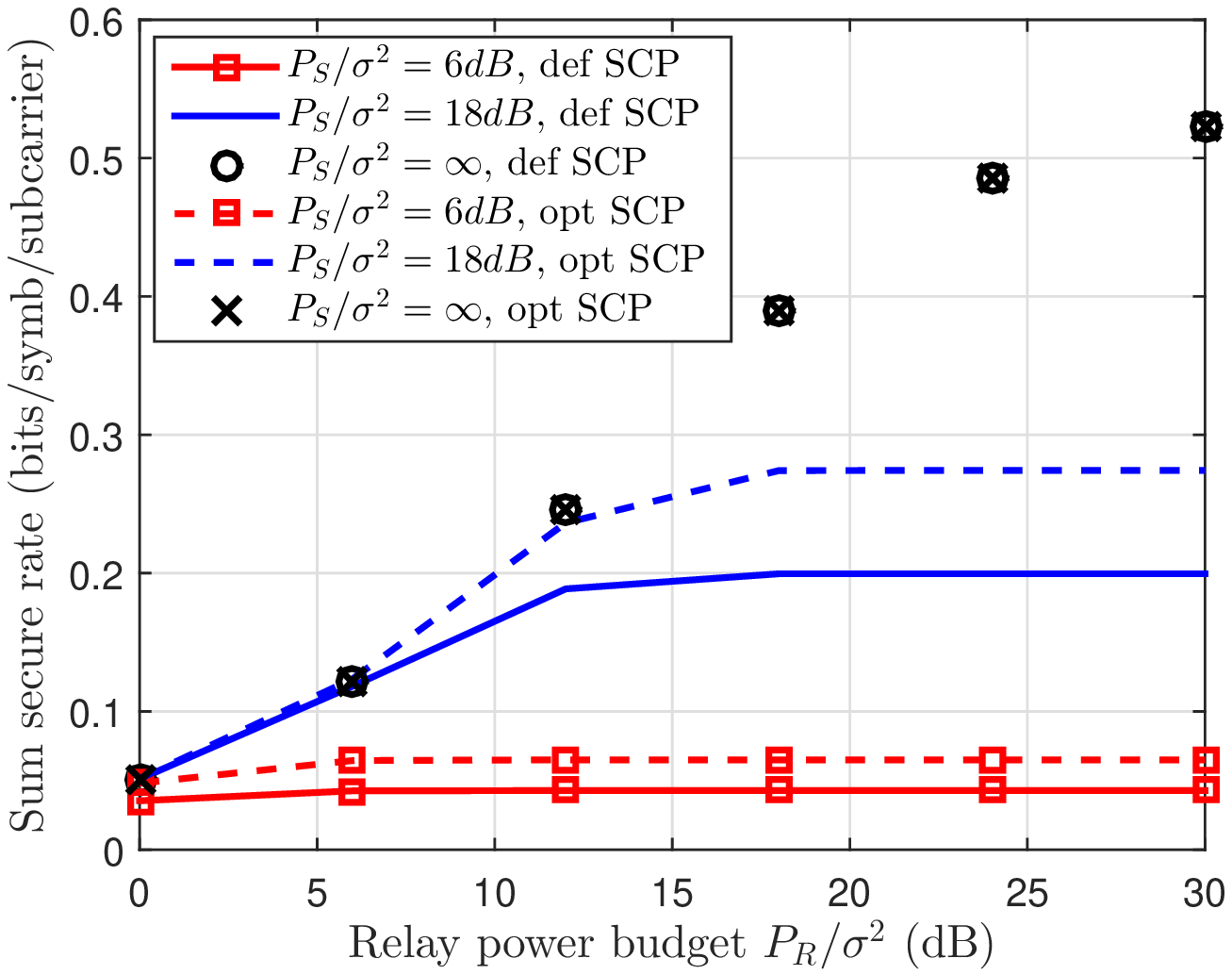,width=3.2in}
			\caption{Variation of sum secure rate versus $P_R$ for different values of $P_S$ in a DF relay-assisted system.}
			\label{fig:sum_rate_max_with_pr_df}
		\end{minipage}\quad 
		\begin{minipage}{.48\textwidth}
			\centering
			\epsfig{file=./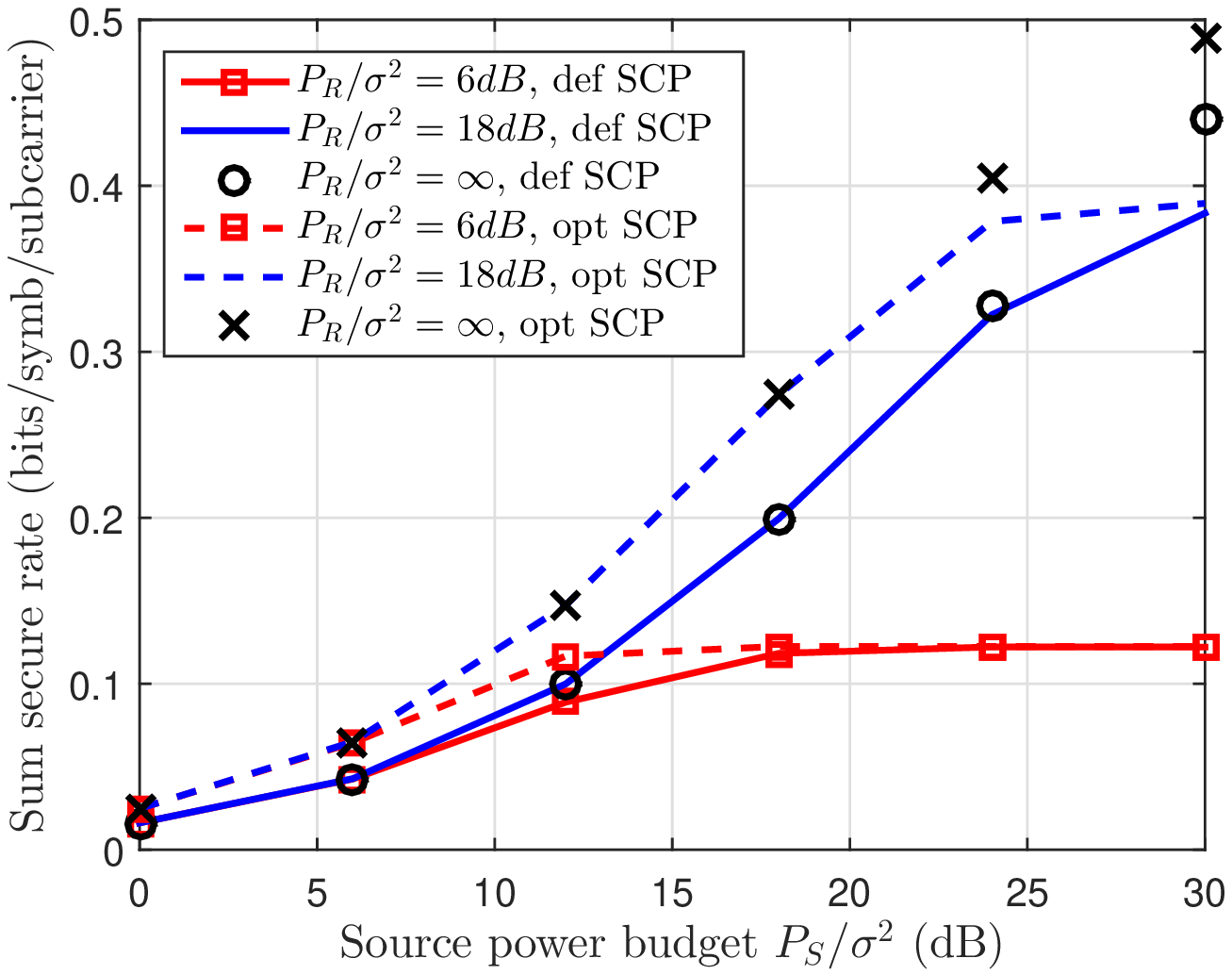,width=3.2in}
			\caption{Variation of sum secure rate versus $P_S$ for different values of $P_R$ in a DF relay-assisted system.}
			\label{fig:sum_rate_max_with_ps_df}
		\end{minipage} 
	\end{figure*}  
	
	\begin{figure*}[!t]
		\begin{minipage}{.48\textwidth}
			\centering
			\epsfig{file=./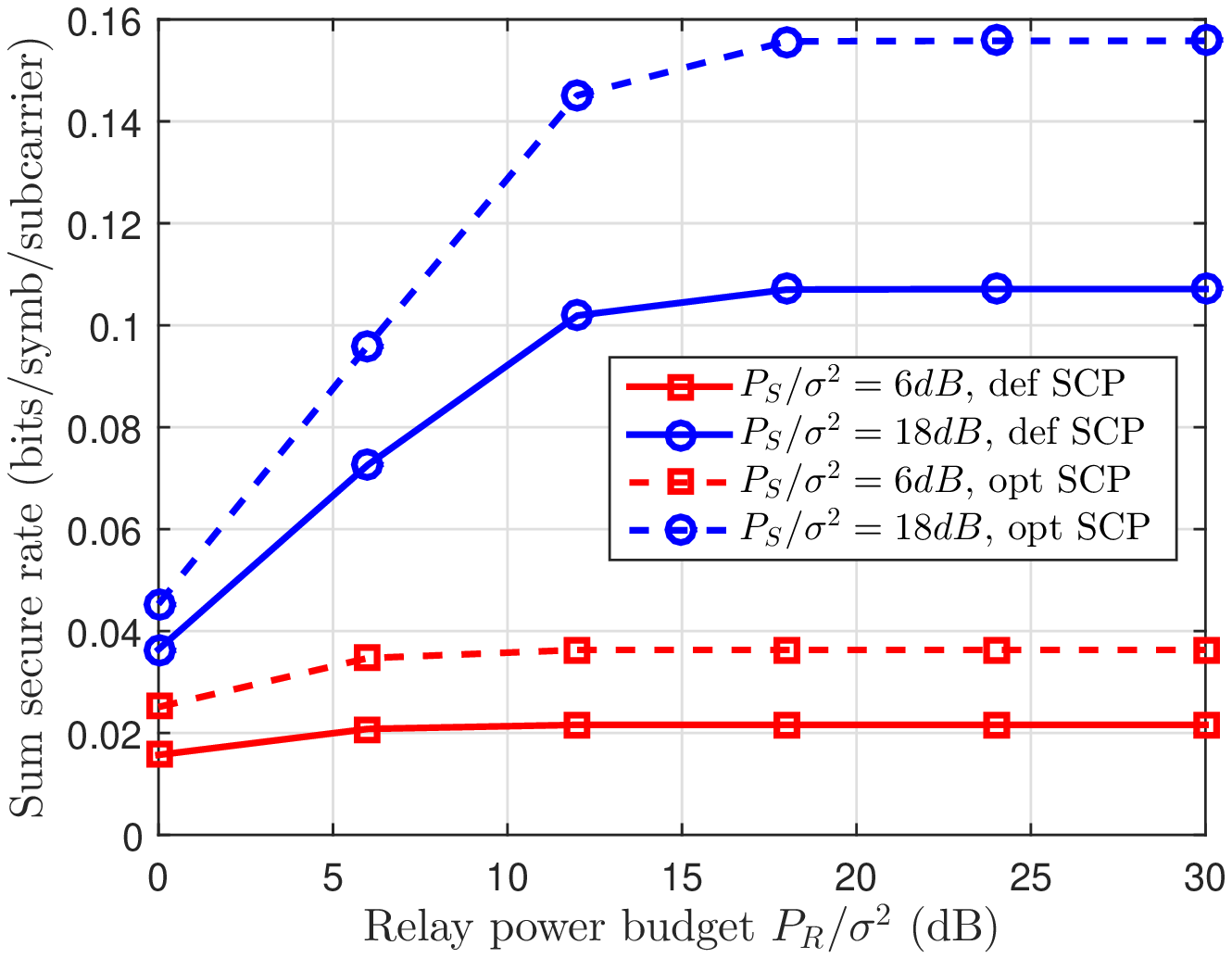,width=3.2in}
			\caption{Variation of sum secure rate versus $P_R$ for different values of $P_S$ in an AF relay-assisted system.}
			\label{fig:sum_rate_max_with_pr_af}
		\end{minipage} \quad
		\begin{minipage}{.48\textwidth}
			\centering
			\epsfig{file=./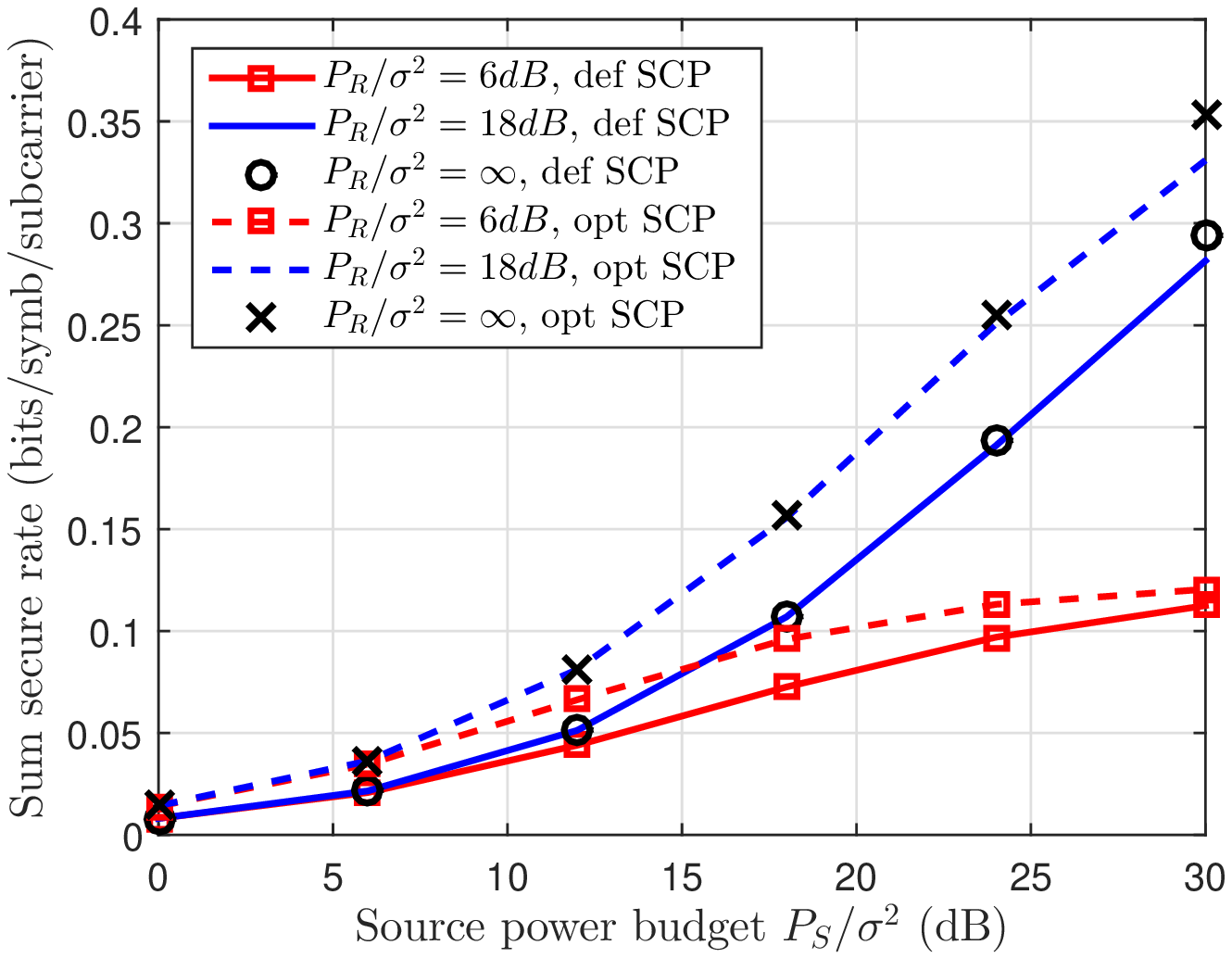,width=3.2in}
			\caption{Variation of sum secure rate versus $P_S$ for different values of $P_R$ in an AF relay-assisted system.}
			\label{fig:sum_rate_max_with_ps_af}
		\end{minipage} 
	\end{figure*}  
	
	\begin{figure}[!t] 
		\centering
		\subfigure[]
		{\includegraphics[width=3in]{./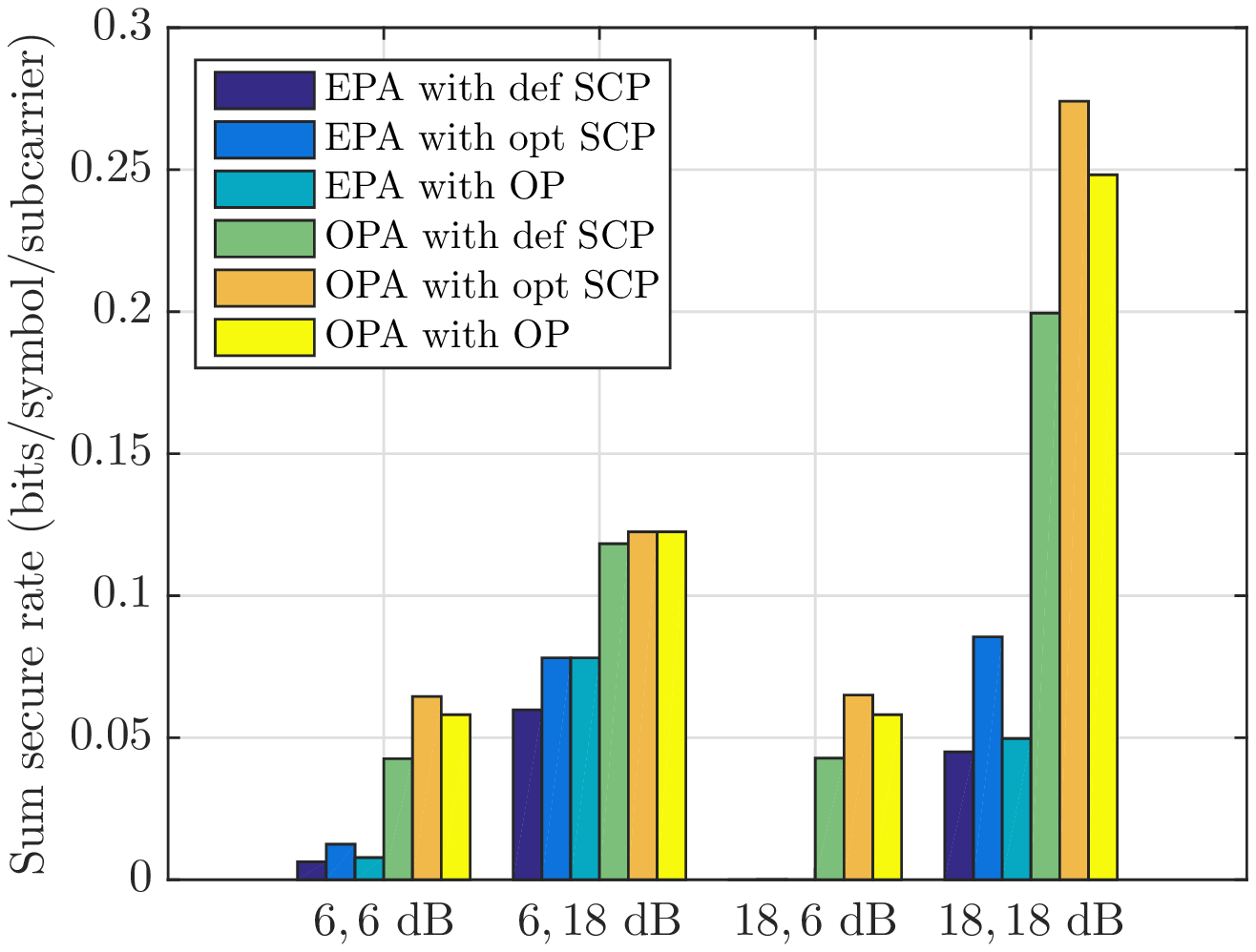} }   
		\subfigure[]
		{\includegraphics[width=3in]{./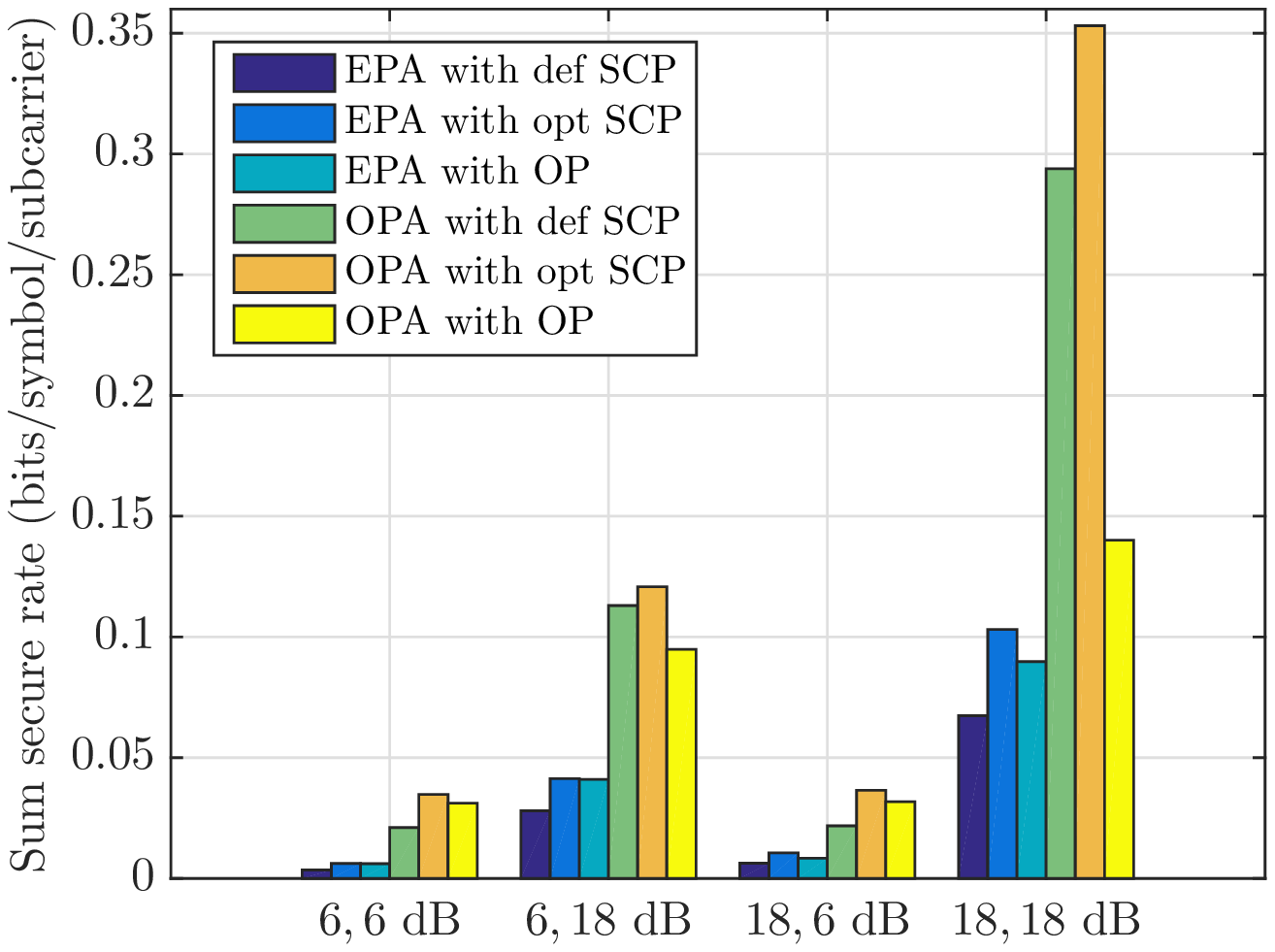} }  \vspace{-2mm}
		\caption{Performance comparison for (a) DF relay and (b) AF relay with different $(P_R,P_S)$.}
		\label{fig:sum_rate_comparison_epa}
	\end{figure}
	In Fig. \ref{fig:sum_rate_max_with_pr_df}, we present the sum rate performance with $P_R/\sigma^2$ for three distinct values of $P_S/\sigma^2$ as $6$ dB, $18$ dB, and $\infty$. $P_S/\sigma^2 = \infty$ is chosen to provide a numerical proof to our proposition that, for a limited $P_R$ `opt SCP' plays a secondary role in improving energy efficiency. When $P_S$ is low, then initially $P_R$ may be the bottleneck but gradually it is $P_S$ which becomes the  bottleneck. Thus at low $P_S$, with varying $P_R$ we observe the change of role by `opt SCP', from energy efficiency to spectrum efficiency, and the associated sum rate improvement. For high $P_S$, this role changes quite late, as initially $P_R$ is bottleneck and `opt SCP' is in complimentary role. It emerges in deciding role when $P_S$ becomes  bottleneck.
	
	Fig. \ref{fig:sum_rate_max_with_ps_df} shows the variation of sum rate with $P_S/\sigma^2$. Three different values of $P_R/\sigma^2$ are taken:  $6$ dB, $18$ dB, and $\infty$.  At a lower $P_R$, with increasing $P_S$, the role of `opt SCP' diminishes from  spectral efficiency to mere energy efficiency, which can be observed from narrowing gap between the performance curves of `def SCP' and `opt SCP'. At high $P_R$, `opt SCP' continues to offer improved overall sum secure rate of the system, and it can be observed by the significant gap between the performance curves of `def SCP' and `opt SCP'.
	
	\subsection{Performance of an AF Relayed System}\label{results_af}
	In this section the performance of {\color{black}an} AF relay-assisted system is considered. The secure rate is concave  increasing with source power while pseudoconcave with relay power. In order, to show the existence of optimal relay power, we show that the secure rate gets saturated with increasing relay power. The performance with relay power budget is plotted in Fig. \ref{fig:sum_rate_max_with_pr_af}. 
	
	The sum rate performance is plotted for two values of source power budget $P_S/\sigma^2=6$ dB and $P_S/\sigma^2=18$ dB. For a lower $P_S$, the optimal relay powers $P_n^{r^*}$ are small. With increasing $P_R$, if there is sufficient $P_R$ to ensure optimal relay power allocation over each subcarrier, the secure rate saturates with $P_R$. At a higher $P_S$, due to higher optimal relay powers $P_n^{r^*}$, the secure rate saturates at a relatively larger $P_R$.
	
	In Fig. \ref{fig:sum_rate_max_with_ps_af}, we present the performance with varying $P_S/\sigma^2$. Here possible values of $P_R/\sigma^2$ are taken as $6$ dB, $18$ dB, and $\infty$ to capture the sum secure rate upper bound that can be achieved by the system. A lower $P_R$ keeps the sum rate bounded as optimal relay power $P_n^{r^*}$ is not available on each subcarrier. With increasing $P_S$, the required  $P_n^{r^*}$ keep increasing, thus the secure rate does not improve much. At higher $P_R$, initially with lower $P_S$ each subcarrier can be allocated optimal relay power so the secure rate improves faster, but later as $P_S$ increases, the secure rate increases slowly as optimal $P_n^{r^*}$ cannot be allocated. Significant performance gain can be observed due to `opt SCP' at higher $P_R$.
	
	\subsection{Comparison with Benchmark Scheme}\label{results_comparison}
	Comparison of presented optimal power allocation (OPA) has been done with equal power allocation (EPA) for `def SCP' and `opt SCP'. To emphasize the benefit of `opt SCP', which is designed for cooperative secure systems, its comparison with `ordered pairing' (OP) \cite{Hottinen_SPAWC_2007} has also been presented. Both EPA and OPA follows optimal subcarrier allocation. This comparison intends to emphasize the performance gain obtained by OPA and `opt SCP'. EPA uses equal power on both  $\mathcal{S}-\mathcal{R}$ and $\mathcal{R}-\mathcal{U}$ links. EPA with `opt SCP' is used to mark the efficacy of `opt SCP' in comparison to `def SCP'. Comparison is done  at four combinations of $(P_R/\sigma^2,P_S/\sigma^2)$ budgets, namely $(6,6)$ dB, $(6,18)$ dB, $(18,6)$ dB, and $(18,18)$ dB. Sum rate performance for DF and AF systems are plotted in Figs. \ref{fig:sum_rate_comparison_epa}(a) and \ref{fig:sum_rate_comparison_epa}(b), respectively.

	Performance comparison for a DF relayed system is presented in Fig. \ref{fig:sum_rate_comparison_epa}(a). Note that for $(P_R/\sigma^2,P_S/\sigma^2) = (6,18)$ dB, `opt SCP' has no significant role, and the gain is minimal. This is because of the limited $P_R$ and in turn bounded achievable rate. `opt SCP' has no scope to improve secure rate further. For this reason OP performs as good as `opt SCP'. For $(P_R/\sigma^2,P_S/\sigma^2) = (18,6)$ dB, EPA gives zero sum rate (for both `def SCP' and `opt SCP'). Since $P_S$ budget is less, it results in $\mathcal{S}-\mathcal{R}$ link to be bottleneck over all subcarriers, forcing all  subcarriers' rate to be zero. Note that joint optimal allocation with OPA and `opt SCP' results in around six times secure rate at $(P_R/\sigma^2,P_S/\sigma^2)=(18,18)$ dB.

	Observing Fig. \ref{fig:sum_rate_comparison_epa}(b) for performance comparison in an AF relayed system, it can be noted that `opt SCP' always plays significant role irrespective of source/relay power budgets. `Opt SCP' offers about 65\% rate improvement compared to optimal power allocation with `def SCP' for $(P_R/\sigma^2,P_S/\sigma^2) = (6,6)$ dB. Even at high SNR such as  $(P_R/\sigma^2,P_S/\sigma^2) = (18,18)$ dB, around 20\% rate improvement is observed in OPA with `opt SCP' over OPA with `def SCP'. OPA with `opt SCP' leads to five times higher secure rate compared to EPA with `def SCP' at $(P_R/\sigma^2,P_S/\sigma^2) = (18,18)$ dB. Further, proposed `opt SCP' shows 10\% and 150\% rate improvement, respectively, for DF and AF relayed system, compared to ordered pairing (OP) for OPA at $(P_R/\sigma^2,P_S/\sigma^2) = (18,18)$ dB. \emph{These results corroborate importance of the proposed `opt SCP' for maximizing sum secure rate in AF and DF relayed systems.}

	\section{Concluding Remarks}\label{sec_conclusion}
	In this paper, joint resource allocation for cooperative secure communication system with untrusted users has been considered. The optimization problem involves subcarrier allocation, power allocation and subcarrier pairing for both AF and DF relayed systems. After presenting optimal subcarrier allocation policy for AF relay,    we have proved that the sum secure rate is concave with source power and pseudoconcave with relay power, and shown that the joint power allocation is a generalized convex problem which can be solved optimally using KKT conditions. SCP has been presented as a novel concept of channel gain tailoring that maximizes sum rate performance. Thus, SCP can be treated as a technique to match subcarriers such that variance between effective channel gains is minimized.
	
	The significant role of optimal SCP in either spectral or energy efficiency improvement has been presented through analytical insights. In a DF relayed system  the sum rate is shown to be controlled by power budget of either source or relay.   Depending on the situation, optimal SCP helps in either improving spectral or energy efficiency. In an AF relay system, optimal SCP plays a crucial role in improving the overall spectral efficiency of the system. We have presented extensive simulation results for a smaller user-subcarrier system to emphasize key concepts. Joint resource allocation scheme is found to give about five and four times improvement  compared to EPA with `deft SCP' for respectively, DF and AF relayed system.


	\appendix
	\section{Proof of Theorem \ref{Concavity_Theorem}} \label{app1}
	We prove the joint-concavity of the secure rate $R_{s_n}^m$ of user $m$ over a subcarrier $n$ in an AF relay-assisted secure communication system in $P_{n}^s$ and $P_{n}^r$ by finding the Hessian matrix, and showing it to be negative semi definite. Let us first define the Hessian matrix $\mathbb{H}\left(\mathcal{O}_n^m\right) \triangleq\left[ \begin{array}{ccc}
	\frac{\partial^2 \mathcal{O}_n^m}{\partial {P_{n}^s}^2} & \frac{\partial^2 \mathcal{O}_n^m}{\partial {P_{n}^s}\partial {P_{n}^r}} \\
	\frac{\partial^2 \mathcal{O}_n^m}{\partial {P_{n}^r}\partial {P_{n}^s}} & \frac{\partial^2 \mathcal{O}_n^m}{\partial {P_{n}^r}^2} \end{array} \right]$ of the operand $\mathcal{O}_n^m$ of $\log(\cdot)$ function in $R_{s_n}^m$, with
	\begin{equation}\label{eq:Prf1}
	\textstyle \frac{\partial^2 \mathcal{O}_n^m}{\partial {P_{n}^s}^2}=-\frac{2 \left(\gamma_n^{sr}\right)^2 P_{n}^r  (\gamma_n^{rm}-\gamma_n^{re}) (\gamma_n^{rm} P_{n}^r +\sigma^2)}{(\gamma_n^{re} P_{n}^r +\sigma^2) (\gamma_n^{rm} P_{n}^r +{\gamma_n^{sr}} P_{n}^s+\sigma^2)^3}.
	\end{equation}
	
	The other second order derivatives, $\frac{\partial^2 \mathcal{O}_n^m}{\partial {P_{n}^s}\partial {P_{n}^r}}$ which is equal to $\frac{\partial^2 \mathcal{O}_n^m}{\partial {P_{n}^r}\partial {P_{n}^s}}$, and $\frac{\partial^2 \mathcal{O}_n^m}{\partial {P_{n}^r}^2}$ are given by \eqref{eq:Prf2} and \eqref{eq:Prf3}. The determinant of $H\left(\mathcal{O}_n^m\right)$ is given by \eqref{eq:Prf4}. 
	From \eqref{eq:Prf4}, it can be observed that 
	$det \left[\mathbb{H}\left(\mathcal{O}_n^m\right)\right] \ge 0$,
	provided   
	$\bigg\{\left(\gamma_n^{rm} \gamma_n^{re} \left(\frac{P_{n}^r}{\sigma^2}\right)^2\ge 1\right)  \wedge \left(\gamma_n^{rm}\ge \gamma_n^{re}\right) \wedge \left(P_{n}^r\le P_{n}^{r^*} \right) \bigg\}$.

	\begin{figure*}		
		\begin{equation}\label{eq:Prf2}\textstyle
		\frac{\partial^2 \mathcal{O}_n^m}{\partial {P_{n}^s}\partial {P_{n}^r}}=\frac{\partial^2 \mathcal{O}_n^m}{\partial {P_{n}^r}\partial {P_{n}^s}}=\frac{{\gamma_n^{sr}}  \left({\gamma_n^{sr}} P_{n}^s \left(\gamma_n^{rm} \gamma_n^{re} \left(P_{n}^r\right)^2+2 \gamma_n^{rm} P_{n}^r\sigma^2 +\sigma^4\right)-(\gamma_n^{rm} P_{n}^r+\sigma^2) \left(\gamma_n^{rm} \gamma_n^{re} \left(P_{n}^r\right)^2-\sigma^4\right)\right)}{(\gamma_n^{rm}-\gamma_n^{re})^{-1}(\gamma_n^{re} P_{n}^r +\sigma^2)^2 (\gamma_n^{rm} P_{n}^r +{\gamma_n^{sr}} P_{n}^s+\sigma^2)^3},
		\end{equation}
		\begin{equation}\label{eq:Prf3}\textstyle
		\frac{\partial^2 \mathcal{O}_n^m}{\partial {P_{n}^r}^2}=-\frac{2 {\gamma_n^{sr}} P_{n}^s  \left(\gamma_n^{rm} \gamma_n^{re} P_{n}^r  \left(3\sigma^2 ({\gamma_n^{sr}} P_{n}^s+\sigma^2)-\gamma_n^{rm} \gamma_n^{re} \left(P_{n}^r\right)^2\right)+\sigma^2({\gamma_n^{sr}} P_{n}^s+\sigma^2) (\gamma_n^{rm}\sigma^2+\gamma_n^{re} ({\gamma_n^{sr}} P_{n}^s+\sigma^2))\right)}{(\gamma_n^{rm}-\gamma_n^{re})^{-1}(\gamma_n^{re} P_{n}^r +\sigma^2)^3 (\gamma_n^{rm} P_{n}^r +{\gamma_n^{sr}} P_{n}^s+\sigma^2)^3}.
		\end{equation}
		\begin{equation}\label{eq:Prf4}\textstyle
		det \left[\mathbb{H}\left(\mathcal{O}_n^m\right)\right] = \frac{\left(\gamma_n^{sr}\right)^2  \left(\left(\gamma_n^{rm} \gamma_n^{re} \left(P_{n}^r\right)^2-\sigma^4\right) \left(-\gamma_n^{rm} \gamma_n^{re} \left(P_{n}^r\right)^2+4 {\gamma_n^{sr}} P_{n}^s\sigma^2+\sigma^4\right)+4 {\gamma_n^{sr}} P_{n}^s \sigma^4(\gamma_n^{re} P_{n}^r +\sigma^2)\right)}{(\gamma_n^{rm}-\gamma_n^{re})^{-2}(\gamma_n^{re} P_{n}^r +\sigma^2)^4 (\gamma_n^{rm} P_{n}^r +{\gamma_n^{sr}} P_{n}^s+\sigma^2)^4}.
		\end{equation} 
	\end{figure*}
	
	
	From \eqref{eq:Prf1} and \eqref{eq:Prf3}, $\frac{\partial^2 \mathcal{O}_n^m}{\partial {{P_{n}^s}^2}}, \frac{\partial^2 \mathcal{O}_n^m}{\partial {P_{n}^r}^2}\le0 \hspace{1mm}\forall\{\left(P_{n}^r\le P_{n}^{r^*}\right)\wedge \left(\gamma_n^{rm}\ge \gamma_n^{re}\right)\}$. This along with the conditions of $det \left[\mathbb{H}\left(\mathcal{O}_n^m\right)\right] \ge 0$ prove that $\mathbb{H}\left(\mathcal{O}_n^m\right)$ is negative semi-definite. Hence, $\mathcal{O}_n^m$ is jointly-concave in $P_{n}^s$ and $P_{n}^r$. Further, as $\log$ is a concave increasing function, $R_{s_n}^m=\frac{1}{2}\log_2\left(\mathcal{O}_n^m\right)$ is also a jointly-concave function of $P_{n}^s$ and $P_{n}^r;$ $\textstyle \forall \bigg\{\left(\gamma_n^{rm} \gamma_n^{re} \left(\frac{P_{n}^r}{\sigma^2}\right)^2\ge 1\right)\wedge \left(P_{n}^r\le P_{n}^{r^*}\right)\wedge \left(\gamma_n^{rm}\ge \gamma_n^{re}\right) \bigg\}$. $\gamma_n^{rm}\ge \gamma_n^{re}$ is ensured by optimal subcarrier allocation policy (cf. \eqref{subcarrier_alloc_relay}). Optimal power allocation ensures $P_{n}^r \leq P_{n}^{r^*}$ (cf. Proposition \ref{Af_relay_PS_PR_budget}). Since under all practical conditions for communication, SNR $\frac{P_{n}^r\gamma_n^{rm}}{\sigma^2}>1$, for any subcarrier $n$, $ \frac{P_{n}^r\sqrt{\gamma_n^{rm} \gamma_n^{re}}}{\sigma^2}$ is practically greater than one.

	Finally, concave objective function represented by a sum of concave functions $R_{s_n}^m\,\forall n$, along with affine constraints $C_{2,1}, C_{2,2}, C_{2,3}$ indicates that the sum rate maximization problem $\mathcal{P}2$ is a generalized convex problem; its global-optimal solution is given by the KKT point.

	\section{Proof of Theorem \ref{theorem_df_scp_channel_tailoring}} \label{app2}
Let us consider a two subcarrier system. The SNRs on subcarriers are $\alpha_1$ and $\alpha_2$ such that $\alpha_1>\alpha_2$. Next we intend to find the effect of widening the gap between the SNRs. Let the updated SNRs be $\alpha_1+x$ and $\alpha_2-x$. Noting the monotonicity of $\log$ function, let us observe the difference $(1+\alpha_1)(1+\alpha_2)$ $-$ $(1+\alpha_1+x)(1+\alpha_2-x)$ which is equal to $(\alpha_1-\alpha_2)x + x^2$. This quantity is always positive for all $x>0$. Thus, there is a positive difference in the sum rate. Hence, widening has resulted in reduced sum rate. It means for maximum sum rate both subcarriers should have equal SNR. 

	Next, we verify the same concept for water filling. Let us take two subcarriers  having gains $\gamma_1$ and $\gamma_2$, such that $\gamma_1>\gamma_2$. The water filling procedure gives power allocation $P_1$ and $P_2$ over the two subcarriers such that
	\begin{eqnarray}
	\textstyle \frac{\gamma_1}{\sigma^2+\gamma_1P_1} = \frac{\gamma_2}{\sigma^2+\gamma_2P_2} = \lambda.
	\end{eqnarray} 
	Substituting $P_2 = P_S-P_1$, we obtain $P_1 = \frac{P_S }{2}+\zeta$ and $\quad P_2 = \frac{P_S }{2}- \zeta$, 
	where $\zeta = \frac{\sigma^2(\gamma_1-\gamma_2)}{2\gamma_1\gamma_2}$. The gap between {\color{black}the} two powers is given as $P_{\Delta} = P_1 - P_2  = \sigma^2 \left( \frac{1}{\gamma_2}-\frac{1}{\gamma_1}\right)$. 
	
	Next, we discuss the other case, where channel gains are changed so as to widen the gap between them. Let the updated channel gains be $\gamma_1' = \gamma_1+\delta$ and $\gamma_2' = \gamma_2-\delta$ for some $\delta>0$. Under these conditions, the updated powers can be written as: $ P_1' = \frac{P_S }{2}+\zeta'$ and $P_2 = \frac{P_S }{2}-\zeta'$, 
	where $\zeta' = \frac{\sigma^2(\gamma_1-\gamma_2+2\delta)}{2(\gamma_1+\delta)(\gamma_2-\delta)}$.
	The difference between the powers is $P_{\Delta}' = P_1' - P_2' = 2\zeta' = \sigma^2 \left( \frac{1}{\gamma_2-\delta}-\frac{1}{\gamma_1+\delta} \right)$. Note that $\frac{1}{\gamma_2-\delta}-\frac{1}{\gamma_1+\delta}>\frac{1}{\gamma_2}-\frac{1}{\gamma_1}$ i.e., $P_{\Delta}'>P_{\Delta}$. Because of the symmetrical powers around $\frac{P_S}{2}$, we can claim that $P_1'>P_1$ and $P_2'<P_2$. Since $P_1'>P_1$ and $(\gamma_1+\delta)>\gamma_1$,  $P_1'(\gamma_1+\delta)>P_1\gamma_1$ and similarly $P_2'(\gamma_2-\delta)<P_2\gamma_2$. Thus, the gap between the respective SNRs has widened because of {\color{black}the} increased gap between channel gains. 
	
	Next we show that by increasing the gap between channel gains the sum rate gets reduced. Let $R_1$ and $R_2$ denote the rates over the subcarriers $1$ and $2$ under the first scheme, where $R_1 = \log_2 \left( 1+\frac{P_1\gamma_1}{\sigma^2} \right) $ and $R_2 = \log_2 \left( 1+\frac{P_2\gamma_2}{\sigma^2} \right)$. The sum rate is
	\begin{align}
	&R = \log_2 \left \{ \left( 1+\left(\frac{P_S }{2}+\frac{\sigma^2(\gamma_1-\gamma_2)}{2\gamma_1\gamma_2}\right)\frac{\gamma_1}{\sigma^2} \right) \right \} \nonumber \\
	& \qquad + \log_2 \left \{ \left( 1+\left(\frac{P_S }{2}-\frac{\sigma^2(\gamma_1-\gamma_2)}{2\gamma_1\gamma_2}\right) \frac{\gamma_2}{\sigma^2} \right) \right\}
	\end{align}
	
	After a few simplifying steps, it can be restated as
	\begin{align}
	R = \log_2 \left[ \frac{\left( \sigma^2 \left(\gamma_1+\gamma_2\right) +P_S\gamma_1\gamma_2\right) ^2}{4\sigma^2\gamma_1\gamma_2} \right]. 
	\end{align}
	
	Let the rates be denoted as $R_1'$ and $R_2'$ in the other scenario with widened channel gains, where $R_1' = \log_2 \left( 1+\frac{P_1'\gamma_1'}{\sigma^2} \right)$ and $R_2' = \log_2 \left( 1+\frac{P_2'\gamma_2'}{\sigma^2} \right)$. The corresponding sum is given as 
	\begin{eqnarray}
	\textstyle  R' = \log_2 \left[ \frac{\left( \sigma^2 \left(\gamma_1+\gamma_2\right) + P_S(\gamma_1+\delta)(\gamma_2-\delta)\right)^2}{4 \sigma^2 (\gamma_1+\delta)(\gamma_2-\delta)} \right].
	\end{eqnarray}
	
	The condition for $R>R'$ can be stated as follows
	\begin{eqnarray}
	\textstyle \frac{\left( \sigma^2 \left(\gamma_1+\gamma_2\right) +P_S\gamma_1\gamma_2\right) ^2}{4\sigma^2\gamma_1\gamma_2} > \frac{\left( \sigma^2 \left(\gamma_1+\gamma_2\right) + P_S(\gamma_1+\delta)(\gamma_2-\delta)\right)^2}{4 \sigma^2 (\gamma_1+\delta)(\gamma_2-\delta)}. 
	\end{eqnarray}
	
	This condition can be further simplified as:
	\begin{align}
	& (\gamma_1+\delta)(\gamma_2-\delta)\left( \sigma^2 \left(\gamma_1+\gamma_2\right) +P_S\gamma_1\gamma_2\right) ^2 > \nonumber \\
	& \qquad \gamma_1\gamma_2 \left( \sigma^2 \left(\gamma_1+\gamma_2\right) + P_S(\gamma_1+\delta)(\gamma_2-\delta)\right)^2.
	\end{align}
	
	After a few more simplification steps, we get
	\begin{align}
	&\sigma^4 (\gamma_1+\gamma_2)^2(\gamma_1+\delta)(\gamma_2-\delta) + (P_S\gamma_1\gamma_2)^2(\gamma_1+\delta)(\gamma_2-\delta) \nonumber \\
	& ~ > \sigma^4(\gamma_1+\gamma_2)^2\gamma_1\gamma_2 + \left(P_S(\gamma_1+\delta)(\gamma_2-\delta)\right)^2\gamma_1\gamma_2.
	\end{align}
	
	Simplifying the above inequality leads to a lower bound on source power budget $P_S$ as: 
	\begin{eqnarray}\label{ps_lower_bound_scp}
	\textstyle P_S > \frac{\sigma^2(\gamma_1+\gamma_2)}{\sqrt{\gamma_1\gamma_2(\gamma_1+\delta)(\gamma_2-\delta)}}.
	\end{eqnarray}
	
	Since, $\gamma_1\gamma_2$ $>$ $(\gamma_1+\delta)(\gamma_2-\delta)$ for all $\delta>0$, the bound can be relaxed as $P_S > \frac{\sigma^2(\gamma_1+\gamma_2)}{\gamma_1\gamma_2}$. Further, it can be observed that $\frac{\gamma_1\gamma_2}{(\gamma_1+\gamma_2)} < \min\{ \gamma_1, \gamma_2\}$. Thus, the condition in \eqref{ps_lower_bound_scp} simply indicates that $\frac{P_S \min\{ \gamma_1, \gamma_2\}}{\sigma^2} > 1$. This condition is generally satisfied because normally SNR is such that $ \frac{P_n^s \gamma_n}{\sigma^2} >1$ over every subcarrier. The condition \eqref{ps_lower_bound_scp} is on $P_S$ the budget. Thus, generally $R>R'$. This proves that rate gets reduced if gap between channel gains is widened. 
	
	It is easy to note that if it is allowed to change channel gains, then the best condition is that all subcarriers should have equal channel gains. Normally, channel gains are random quantities. However, by considering SCP that gives feasibility  of pairing subcarriers on $\mathcal{S}-\mathcal{R}$ and $\mathcal{R}-\mathcal{U}$ link, the possibility of controlling the effective channel gains of all subcarriers appears feasible. Next question is how to match subcarriers to create good effective  channel gains. Ideally the subcarriers should be matched such that effective channel gains on all subcarriers are same. But due to finite number of pairing combinations it may not be possible to have all effective channel gains equal. Thus, the best strategy is to minimize the variance between the effective channel gains.    
	

\end{document}